\documentclass[11pt]{article}
\usepackage[T1]{fontenc}
\usepackage{lmodern}
\usepackage{amsmath,amssymb,amsthm}
\usepackage{booktabs,array}
\usepackage{enumitem}
\usepackage{microtype}
\usepackage{xurl}
\usepackage{tikz}
\usetikzlibrary{positioning,arrows.meta,calc}
\usepackage[hidelinks]{hyperref}
\hypersetup{pdftitle={Finite-ring obstructions for quadratic binary radius-two cellular automata}}
\usepackage[backend=biber,style=numeric,sorting=nyt,sortcites=true]{biblatex}
\usepackage[a4paper,margin=27mm]{geometry}
\selectfont
\newtheorem{theorem}{Theorem}[section]
\newtheorem{proposition}[theorem]{Proposition}
\newtheorem{lemma}[theorem]{Lemma}
\newtheorem{corollary}[theorem]{Corollary}
\newtheorem*{maintheorem}{Main theorem}

\title{Finite-ring obstructions for quadratic binary radius-two cellular automata}
\author{
Houqiao Fu\\
School of Cyber Science and Technology, Hubei University\\
Wuhan, China\\
\texttt{202431120012024@stu.hubu.edu.cn}
}
\date{}

\begin{document}
\maketitle
\begin{abstract}
We study one-dimensional binary cellular automata with a five-slot radius-two local rule of exact algebraic-normal-form degree two, acting on periodic rings of length \(n\).  We prove that every such rule is non-injective whenever \(4\mid n\) and \(n\ge 8\).  The proof begins with the four-cell collapse, where the two extreme formal slots coincide.  A structural classification of the resulting four-variable maps separates the \(65\,472\) exactly quadratic rules into \(63\,456\) rules with an immediate ring-four collision and \(2\,016\) exceptional lifts.  The latter split into layers of sizes \(480\) and \(1\,536\).  Their remaining finite obligations are represented by \(136\) parameter-region constructions and \(768\) per-lift records, respectively.  Each certificate supplies differentiating closed walks of lengths \(8\) and \(12\) with a common pair-graph base vertex.  Concatenation then gives lengths \(8a+12b\), which are exactly the multiples of four from eight onward.  The load-bearing finite certificate core therefore contains \(904=136+768\) independently replayable objects checked by standalone, non-searching programs.  The complete checker CLIs additionally reconstruct expected populations and execute coverage, complement, and regression/guard checks; \(904\) is not a count of total checker operations.  Periodic extension also yields full-shift non-injectivity; that consequence is used here only as a corollary.
\end{abstract}

\section{Introduction}
\label{sec:introduction}

Reversibility on finite periodic configurations is a distinct problem for one-dimensional cellular automata.  Fixing a local rule does not fix a single finite global map: the map depends on the ring length, and short rings can identify formal neighbourhood slots that are distinct on the full shift.  We therefore keep the local encoding, ring-size dependence, and full-shift question separate.

Graph methods for one-dimensional cellular automata provide a standard language for these questions.  De Bruijn representations and related graph constructions have long been used to encode global constraints~\cite{Sutner1991}.  Nobe and Yura studied reversibility under periodic boundary conditions~\cite{NobeYura2004}.  Wang et al. give a graph-based method that computes the complete periodic reversibility sequence for a fixed one-dimensional finite cellular automaton; in particular, their Theorem~3 propagates irreversibility along nonnegative integer combinations of elementary-circuit lengths through a common negative vertex, and their circuit graph records circuit intersections~\cite{WangEtAl2025}.  Haugland and Omland approach the problem through liftings: their finite censuses include diameter-five degree-two data at selected ring sizes, and their proper-lifting classifications exclude degree-two proper liftings in the relevant small-diameter range~\cite{HauglandOmlandAlmost,HauglandOmlandNewClasses}.  Earlier, Omland and St\u{a}nic\u{a} defined finite truncations \(\operatorname{inv}_m(f)=\operatorname{inv}(f)\cap\{k,\ldots,m\}\) and in Section~9 tabulated five-variable quadratic liftings through \(m=15\), including equivalence-class data for \((5,n)\)-liftings with \(7\le n\le15\) and a complete list, up to essential equivalence, of nonlinear quadratic \((5,9)\)-liftings~\cite{OmlandStanica2022}.

These results address complementary questions.  A fixed-rule graph computation can determine an entire reversibility sequence, so the cycle-length combination principle is prior methodology.  Finite lifting tables give exact information on bounded ranges, while proper-lifting classifications concern the induced shift-invariant transformation.  The sources located for this manuscript do not give the specific family-wide statement for every exact-quadratic radius-two rule on the complete congruence class considered here together with the structural four-cell reduction and finite certificate compression used below.  We therefore treat this positioning as a novelty candidate rather than a priority claim.

\begin{maintheorem}
Let \(f:\mathbb F_2^5\to\mathbb F_2\) have ANF degree exactly two.  If
\[
      4\mid n,\qquad n\ge8,
\]
then the periodic global map \(F_{f,n}:\mathbb F_2^n\to\mathbb F_2^n\) is not injective.
\end{maintheorem}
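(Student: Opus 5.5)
The plan follows the route announced in the abstract, with one reduction first. We reduce the statement for every \(n\in\{8,12,16,\dots\}\) to the existence of two differentiating closed walks, of lengths \(8\) and \(12\), through a common base vertex of the pair graph of \(f\). The pair graph \(G_f\) has vertices the ordered pairs \((u,v)\) of length-four windows \(u,v\in\mathbb F_2^4\) and edges labelled by pairs of bits \((a,b)\). There is an edge from \((u,v)\) to \((u',v')\) whenever the windows overlap consistently and the outputs agree, i.e.\ \(f(u\,a)=f(v\,b)\). A closed walk of length \(m\) yields two periodic configurations \(x,y\in\mathbb F_2^m\) with \(F_{f,m}(x)=F_{f,m}(y)\). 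It is \emph{differentiating} if it is not contained in the diagonal, and then \(x\neq y\). Two differentiating closed walks of lengths \(8\) and \(12\) based at the same vertex can be concatenated into differentiating closed walks of every length \(8a+12b\) with \(a,b\ge0\), \((a,b)\ne(0,0)\). This is the Wang et al.\ combination principle; the concatenation stays off the diagonal because it contains an off-diagonal step. Since \(\{8a+12b\}\) equals \(\{n: 4\mid n,\ n\ge 8\}\), because \(\gcd(8,12)=4\) and \(16=8+8\), \(20=8+12\), and so on, it suffices to exhibit such a certificate for every exact-quadratic \(f\).

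To organise the cases, we first use the four-cell collapse. On a ring of length four, slots \(x_{i-2}\) and \(x_{i+2}\) coincide, so \(F_{f,4}\) is governed by the four-variable map \(g(x_0,\dots,x_3)=f(x_0,x_1,x_2,x_3,x_0)\). Suppose \(F_{f,4}\) has a collision \(x\ne y\). Then \(x^2\ne y^2\) on a ring of length \(8\) (two periods) and \(x^3\ne y^3\) on a ring of length \(12\) collide, and both come from closed walks through the same base vertex, namely the state at position \(0\). So every rule with a ring-four collision is settled immediately. For these rules we plan a structural classification of the quadratic \(g\) under the symmetry group generated by shifts, reflection, complementation of inputs and outputs, and adding affine terms where this preserves collisions. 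The aim is to show that noninjectivity of \(F_{g,4}\) holds outside an explicit exceptional family, matching the counts \(63\,456\) and \(2\,016\). A natural criterion is to look for a nonzero \(d\in\mathbb F_2^4\) with \(F_{4}(x+d)=F_4(x)\) for some \(x\). Because \(f\) is quadratic, the map \(x\mapsto F(x+d)+F(x)\) is affine in \(x\), so the question reduces to solvability of a linear system over \(\mathbb F_2\). This makes the classification a linear-algebra case split on the quadratic form of \(f\).

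For the \(2\,016\) exceptional lifts, where \(F_{f,4}\) is injective, we have to construct length-\(8\) and length-\(12\) collisions sharing a base vertex directly. For the \(480\)-rule layer, we would parametrise the rules by the few free ANF coefficients that the classification leaves. We then give uniform witness pairs \(x\ne y\) on rings of length \(8\) and \(12\), again using the affine-difference trick: choose a difference pattern \(d\) with period \(4\) or \(8\) and solve the resulting linear system in \(x\). This works per parameter region, for \(136\) regions. For the \(1\,536\)-rule layer, with symmetry reducing it to \(768\) representatives, we expect to record explicit walks per rule.

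The main obstacle is the exceptional layers. A uniform algebraic witness must both solve the linear system and align the base vertices of the two walks. When no uniform choice exists, we fall back to per-lift certificates verified by a non-searching checker. So the real work is to confirm that the classification leaves exactly these residual cases and that each has a certificate.
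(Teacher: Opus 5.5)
Your outer reductions are sound and match the paper: concatenating common-base walks, the $8a+12b$ semigroup, and repeating a ring-four collision. Your doubling/tripling construction gives a common base and is equivalent to the paper's periodic-repetition lemma. The gap is that neither load-bearing step is carried out.

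\textbf{The ring-four classification is not proved.} You describe a per-rule decision procedure (solvability of the affine system $F_4(x+d)=F_4(x)$ for each $d\ne0$) and a symmetry reduction, but you never derive the criterion. So the exceptional family and the counts $63\,456$ and $2\,016$ are borrowed, not proved. Running that procedure exhaustively over all $2^{11}$ collapsed rules would be a legitimate computational substitute, but you do not do so.

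Your symmetry group is also not collision-preserving as stated. Adding a linear term changes the global map by a circulant $A$, and a collision $F(x)=F(y)$ survives only if $A(x+y)=0$. Only the output constant is harmless; this is the paper's output-complement lemma.

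For comparison, the paper proves that $G_4$ is bijective iff $\lambda=1$ and $Q\in W$:
\begin{itemize}
\item The parity identity $\Pi(G_4(x))=\lambda\Pi(x)+\alpha w(x)$, together with a weight count, forces $\lambda=1$ and $\alpha=0$.
\item The complement derivative $G_4(x+\mathbf1)+G_4(x)=(\lambda+m)\mathbf1+C_dx$ forces $m=0$. Via the chain ring $\mathbb F_2[z]/((z+1)^4)$ it also forces $d=0$. Hence the quadratic support is an Eulerian subgraph of $K_4$ with an even number of edges.
\item Sufficiency is a shear factorization.
\end{itemize}

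\textbf{You never address the lift structure.} The collapse kernel $(x_0+x_4)\operatorname{span}\{1,x_0,x_1,x_2,x_3\}$ gives $32$ five-slot lifts per collapsed signature, and $30$ of them are exactly quadratic when $Q=0$. This is what turns $64$ signatures into $16\cdot30+48\cdot32=2016$ rules. It also supplies the parameter $t$ on which the certificates are organised.

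\textbf{The exceptional rules are not handled.} The whole content of the theorem lies in the $2\,016$ exceptional rules, since ring four gives nothing for them. Your proposal contains no witnesses and no argument that your defect systems are solvable. The claim that the method works on $136$ regions is an assertion taken from the abstract. The common-base alignment, which you yourself flag as the main obstacle, is left open.

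Linearity in the background for a fixed defect produces ring-$8$ and ring-$12$ collisions independently, with no control over their base vertices. The paper closes this gap in two ways:
\begin{itemize}
\item It builds the ring-$12$ collision as an extension $x\Vert u,\ y\Vert v$ of a ring-$8$ collision $(x,y)$. Both walks then start at the same inherited four-cell context.
\item It uses the fact that, for fixed windows, edge legality is affine in the kernel parameter $t$. One construction therefore certifies an affine region of lifts.
\end{itemize}
The $136$ region constructions ($Q=0$) and $768$ per-lift records ($Q\ne0$) are then replayed by non-searching checkers.

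As written, your proposal reproduces the paper's architecture but leaves both the structural classification and the finite certificate layer as intentions, so it does not yet prove the statement.
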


There are \(65\,472\) exactly quadratic five-slot rules.  The proof first classifies which rules can still be bijective after the four-cell slot collapse.  This leaves \(2\,016\) exceptional lifts and gives a ring-four collision for the other \(63\,456\).  The exceptional population splits as \(480+1\,536\).  Its load-bearing finite certificate core consists of \(136\) region-level constructions in the first layer and \(768\) per-representative certificates in the second; the verifier CLIs perform additional reconstruction and regression checks beyond those \(904\) objects.  Thus the proof is not an exhaustive scan over all rules and all ring lengths: structural reductions remove the large easy population, and a graph-theoretic gluing lemma turns two certified lengths into the required infinite congruence class.

Figure~\ref{fig:proof-architecture} summarizes the proof boundary.  The four-cell classification and population arithmetic are structural.  Only the two exceptional branches meet the finite certificate layer.  Their common-base length-8 and length-12 walks are then concatenated, and the elementary semigroup identity \(8a+12b=4(2a+3b)\) supplies every multiple of four from eight onward.  The full-shift non-injectivity consequence follows by periodic extension; its statement is already covered by the proper-lifting literature and is not the organizing claim of this paper.

We organize the manuscript around three contributions:
\begin{enumerate}[label=(\roman*),leftmargin=2.0em,itemsep=0.2em,topsep=0.4em]
  \item the finite-ring non-injectivity theorem for every \(n\ge8\) divisible by four;
  \item the structural four-cell reduction, including the exact split \(65\,472=63\,456+480+1\,536\);
  \item a computer-assisted closure of the two exceptional layers by independently replayable finite certificates, followed by an analytic common-base gluing argument.
\end{enumerate}

\begin{figure}[t]
  \centering
  \begin{tikzpicture}[
  font=\small,
  node distance=6mm and 9mm,
  box/.style={draw,rounded corners=1pt,align=center,inner xsep=6pt,inner ysep=4pt},
  cert/.style={box,dashed},
  arr/.style={-{Latex[length=2mm]},thin},
  every node/.style={text=black}
]
\node[box] (all) {\(65\,472\)\\exact-quadratic rules};
\node[box,below=of all] (r4) {ring-four structural\\classification};
\node[box,below left=8mm and 18mm of r4] (easy) {\(63\,456\)\\ring-four collision};
\node[box,below right=8mm and 18mm of r4] (exc) {\(2\,016\)\\exceptional lifts};
\node[box,below left=7mm and 8mm of exc] (q0) {\(480\)\\\(Q=0\)};
\node[box,below right=7mm and 8mm of exc] (qn) {\(1\,536\)\\\(Q\ne0\)};
\node[cert,below=of q0] (c0) {\(136\) finite\\constructions};
\node[cert,below=of qn] (cn) {\(768\) finite\\records};
\node[box,below=9mm of $(c0)!0.5!(cn)$] (walks) {common-base walks\\\(W_8:v\to v,\quad W_{12}:v\to v\)};
\node[box,below=of walks] (glue) {concatenate\\\(W_8^aW_{12}^b\)};
\node[box,below=of glue] (semigroup) {\(8a+12b\)\\all \(4\mid n,\ n\ge8\)};
\node[box,left=15mm of glue] (repeat) {periodic\\repetition};

\draw[arr] (all) -- (r4);
\draw[arr] (r4) -- (easy);
\draw[arr] (r4) -- (exc);
\draw[arr] (exc) -- (q0);
\draw[arr] (exc) -- (qn);
\draw[arr] (q0) -- (c0);
\draw[arr] (qn) -- (cn);
\draw[arr] (c0) -- (walks);
\draw[arr] (cn) -- (walks);
\draw[arr] (walks) -- (glue);
\draw[arr] (glue) -- (semigroup);
\draw[arr] (easy) |- (repeat);
\draw[arr] (repeat) -- (semigroup);
\end{tikzpicture}
  \caption{Proof architecture.  Ordinary solid nodes and arrows denote structural or elementary deductions.  Dashed certificate nodes are the only load-bearing finite machine layer: \(136\) constructions for \(Q=0\) and \(768\) records for \(Q\ne0\).  The historical full-rule archive and exploratory searches do not occur in this dependency graph.}
  \label{fig:proof-architecture}
\end{figure}

The remainder of the paper keeps the analytic and finite-certificate parts explicit.  Sections~\ref{sec:preliminaries}--\ref{sec:certificate-reduction} establish the notation, collision-graph language, four-cell classification, and population split.  Sections~\ref{sec:q0-layer} and~\ref{sec:qnonzero-layer} treat the two exceptional layers.  Section~\ref{sec:gluing} proves the all-length theorem, Section~\ref{sec:verification} states the trusted proof boundary, and the appendices record reproducibility metadata and the non-load-bearing ring-eight witness formulas.

\section{Preliminaries and conventions}
\label{sec:preliminaries}

\subsection{Local rules and finite periodic rings}

All algebra is over the binary field \(\mathbb F_2\).  We work with a
one-dimensional binary cellular automaton in a fixed radius-two, five-slot
encoding.  Thus a local rule is a Boolean function
\[
    f:\mathbb F_2^5\longrightarrow \mathbb F_2,
    \qquad
    (x_0,x_1,x_2,x_3,x_4)\longmapsto f(x_0,x_1,x_2,x_3,x_4),
\]
where the five formal slots correspond, in order, to offsets
\(-2,-1,0,1,2\).  For a positive integer \(n\), the associated global map on
the periodic ring of length \(n\) is
\begin{equation}
\label{eq:finite-global-map}
    F_{f,n}:\mathbb F_2^n\longrightarrow\mathbb F_2^n,
    \qquad
    (F_{f,n}(x))_i
    =f(x_{i-2},x_{i-1},x_i,x_{i+1},x_{i+2}),
\end{equation}
with indices read modulo \(n\).

The five slots in \eqref{eq:finite-global-map} are formal positions of the
local rule.  When \(n<5\), distinct formal slots may refer to the same cell of
the ring.  Our convention is to evaluate the original five-variable Boolean
function on those repeated cell values.  We do not first replace \(f\) by a
lower-arity truth table.  At \(n=4\), for example, slots \(x_0\) and \(x_4\)
coincide.

A rule may have effective diameter smaller than five if it does not depend on
one or both extreme formal variables.  We keep effective diameter separate from
the five-slot encoding.  This distinction is relevant when comparing our
radius-two family with the lifting literature, where diameter is normally the
smallest interval of variables on which the rule actually depends.

\subsection{Algebraic normal form and the exact-quadratic family}

Every Boolean local rule has a unique algebraic normal form (ANF)
\[
  f(x)=c+\sum_{i=0}^{4} \ell_i x_i
      +\sum_{0\le i<j\le4} q_{ij}x_ix_j
      +\text{terms of degree at least three}.
\]
We call \(f\) \emph{exactly quadratic} when its ANF degree is exactly two.
Thus ``quadratic'' in the main theorem never means ``degree at most two.''
There are ten quadratic monomials in five variables, so the number of exactly
quadratic five-slot rules is
\begin{equation}
\label{eq:quadratic-population}
    (2^{10}-1)2^6=65\,472.
\end{equation}
This counts ANF coefficient vectors, not standard truth-table rule numbers.
When numerical masks occur in the supplementary material, their encoding is
specified explicitly.

\subsection{Finite rings and the full shift}

The central result of this paper concerns finite periodic rings.  Since
\(F_{f,n}\) is a self-map of the finite set \(\mathbb F_2^n\), injectivity and
bijectivity are equivalent.  The theorem proved in Section~\ref{sec:gluing} is
\begin{equation}
\label{eq:main-theorem-preview}
    \deg_{\mathrm{ANF}}f=2,\qquad 4\mid n,\qquad n\ge8
    \quad\Longrightarrow\quad
    F_{f,n}\text{ is not injective}.
\end{equation}

The bi-infinite cellular automaton is
\[
    F_{f,\infty}:\mathbb F_2^{\mathbb Z}\longrightarrow
    \mathbb F_2^{\mathbb Z},
    \qquad
    (F_{f,\infty}(x))_i
       =f(x_{i-2},x_{i-1},x_i,x_{i+1},x_{i+2}).
\]
A collision on a finite periodic ring lifts periodically to a collision of
\(F_{f,\infty}\).  Hence \eqref{eq:main-theorem-preview} implies full-shift
non-injectivity.  We use that implication only as a corollary.  The statement
that no degree-two proper lifting of effective diameter at most five exists is
already contained in the classifications of Haugland and
Omland~\cite{HauglandOmlandAlmost,HauglandOmlandNewClasses}; our finite-ring
theorem records the stronger uniform obstruction on all ring lengths
\(n\equiv0\pmod4\), \(n\ge8\).

\section{Collision graphs for periodic rings}
\label{sec:collision-graphs}

Graph representations of one-dimensional cellular automata are classical.
De Bruijn methods have been used to study reversibility and related global
properties, including periodic-boundary reversibility; see, for example,
Sutner~\cite{Sutner1991}, Nobe and Yura~\cite{NobeYura2004}, and the recent
periodic reversibility framework of Wang et al.~\cite{WangEtAl2025}.  We use a
pair graph only as a proof language for finite collisions; the graph technique
itself is prior methodology.

\subsection{The pair graph}

Let \(\mathcal C=\mathbb F_2^4\).  A pair-graph vertex is an ordered pair
\((u,v)\in\mathcal C^2\), with
\(u=(u_0,u_1,u_2,u_3)\) and \(v=(v_0,v_1,v_2,v_3)\).  For appended bits
\(a,b\in\mathbb F_2\), let
\[
  \sigma_a(u)=(u_1,u_2,u_3,a),\qquad
  \sigma_b(v)=(v_1,v_2,v_3,b).
\]
There is an edge
\begin{equation}
\label{eq:pair-edge}
   (u,v)\longrightarrow(\sigma_a(u),\sigma_b(v))
\end{equation}
labelled \((a,b)\) precisely when
\begin{equation}
\label{eq:equal-output-edge}
   f(u_0,u_1,u_2,u_3,a)=f(v_0,v_1,v_2,v_3,b).
\end{equation}
An edge is \emph{differentiating} if its label has \(a\ne b\).

\begin{proposition}[Closed-walk collision criterion]
\label{prop:closed-walk-collision}
For \(n\ge5\), the finite map \(F_{f,n}\) is non-injective if and only if the
pair graph contains a closed walk of length \(n\) whose two coordinate words
are distinct.  Equivalently, the walk may be required to contain a
differentiating edge.
\end{proposition}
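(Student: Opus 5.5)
We prove both directions by setting up a bijection between pairs of ring configurations with equal images and closed walks of length \(n\) in the pair graph. Given \(x,y\in\mathbb F_2^n\), define for each \(i\in\mathbb Z/n\) the windows
\[
u^{(i)}=(x_{i-2},x_{i-1},x_i,x_{i+1}),\qquad v^{(i)}=(y_{i-2},y_{i-1},y_i,y_{i+1}),
\]
with indices read modulo \(n\). Then \(\sigma_{x_{i+2}}(u^{(i)})=u^{(i+1)}\), and likewise for \(v\). The five-tuple \((u^{(i)},x_{i+2})\) is exactly the neighbourhood used in \eqref{eq:finite-global-map} at position \(i\). So condition \eqref{eq:equal-output-edge} for the step \((u^{(i)},v^{(i)})\to(u^{(i+1)},v^{(i+1)})\) with label \((x_{i+2},y_{i+2})\) says precisely that \((F_{f,n}(x))_i=(F_{f,n}(y))_i\). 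Hence \(F_{f,n}(x)=F_{f,n}(y)\) if and only if the sequence \((u^{(0)},v^{(0)}),\dots,(u^{(n-1)},v^{(n-1)}),(u^{(0)},v^{(0)})\) is a closed walk of length \(n\) in the pair graph. Its label sequence is \((x_{i+2},y_{i+2})_{i}\), which is a cyclic shift of \((x,y)\), so its two coordinate words are distinct exactly when \(x\ne y\). This proves the forward direction.

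For the converse, start with a closed walk \(w_0\to w_1\to\cdots\to w_n=w_0\) of length \(n\), where \(w_i=(u^{(i)},v^{(i)})\) and the \(i\)-th edge has label \((a_i,b_i)\). We must check that the windows are consistent with a single periodic configuration. Since each step is a shift, we get \(u^{(i+1)}_3=a_i\) and \(u^{(i+1)}_k=u^{(i)}_{k+1}\) for \(k\le 2\). Iterating four shifts gives
\[
u^{(i+4)}=(a_i,a_{i+1},a_{i+2},a_{i+3}),
\]
with indices modulo \(n\). Because the walk closes, this formula holds for every index, including the wrap-around. Now set \(x_{j}=a_{j-2}\) (indices mod \(n\)), so that \(u^{(i)}=(x_{i-2},\dots,x_{i+1})\) for all \(i\); define \(y\) from the \(b_i\) in the same way. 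The edge condition then gives \(F_{f,n}(x)=F_{f,n}(y)\) exactly as in the first paragraph, and the words \(x,y\) are distinct iff the label words differ.

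The main obstacle is precisely this consistency check, and it is where the hypothesis \(n\ge5\) is used. For \(n\ge5\), the five ring positions \(i-2,\dots,i+2\) are pairwise distinct. Consequently the window \(u^{(i)}\) together with the appended bit \(a_i\) are independent coordinates of the configuration, so every closed walk comes from a genuine configuration, with no extra identifications between slots. For \(n<5\) a window would have to agree with itself after wrap-around (for example \(x_0=x_4\) when \(n=4\)), and a generic closed walk need not respect this. Some care is still needed for \(n\ge5\): one should verify that the windows recovered from the labels agree with the given vertices. This holds because \(n\ge4\) and closure of the walk let us determine \(u^{(0)}\) from \(a_{n-4},\dots,a_{n-1}\).

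Finally, the word \(x\) is a cyclic relabelling of \((a_i)\), and similarly \(y\) of \((b_i)\). So \(x\ne y\) holds iff \(a_i\ne b_i\) for some \(i\), that is, iff the walk contains a differentiating edge. This gives the stated equivalent form of the criterion.
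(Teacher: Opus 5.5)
Your proof is correct and is essentially the paper's argument: you slide the four-cell contexts in parallel to get a closed walk, and conversely you read the two words off the labels, with closure guaranteeing that the last four labels reproduce the base context. One aside is wrong, though: $n\ge5$ is never actually used. The identity $u^{(i+4)}=(a_i,\dots,a_{i+3})$ on the periodically extended walk forces every slot coincidence automatically; for $n=4$ it gives $u^{(i)}_0=a_{i-4}=a_i$. So closed walks of length $n<5$ do respect the identifications, and your reconstruction works for every $n\ge1$ under the paper's slot-coincidence convention.
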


\begin{proof}
Given distinct \(x,y\in\mathbb F_2^n\) with
\(F_{f,n}(x)=F_{f,n}(y)\), slide a four-cell context around the two words in
parallel.  Equality of the output coordinates gives
\eqref{eq:equal-output-edge} at each step, and periodicity closes the walk after
\(n\) steps.  Conversely, the labels of a closed walk, together with its base
context, reconstruct two periodic words; every edge supplies one equality of
local outputs.  Closure ensures the last four appended labels reproduce the
base context, and a differentiating edge ensures the reconstructed words are
distinct.
\end{proof}

The proof below uses this graph only at lengths 8 and 12.  Ring length four is
handled directly by the slot-coincidence convention of
Section~\ref{sec:preliminaries}.

\subsection{Finite certificates}

A \emph{finite collision certificate} is an explicit finite object from which
one can reconstruct a nontrivial closed walk, or equivalently a pair of
colliding finite configurations, without searching for that witness.  This
separates discovery from verification.  The exploratory programs used during
the project are not theorem premises.  The load-bearing machine layer consists
only of explicit exhibits checked by standalone programs that reconstruct the
relevant rule and verify the stated local equalities directly.

\section{The four-cell collapse}
\label{sec:ring4}

On a four-cell ring, the first and fifth formal slots coincide.  Define the
collapse
\begin{equation}
\label{eq:ring4-collapse}
   \rho(f)(a,b,c,d)=f(a,b,c,d,a).
\end{equation}
Write \(g=\rho(f)\), and let
\(G_4:\mathbb F_2^4\to\mathbb F_2^4\) be its cyclic global map.  We first
record the algebra of the collapse and then classify exactly when \(G_4\) is a
permutation.

\subsection{The collapse map and its kernel}

Let
\[
 R_5=\mathbb F_2[x_0,\ldots,x_4]/(x_i^2+x_i),\qquad
 R_4=\mathbb F_2[x_0,\ldots,x_3]/(x_i^2+x_i),
\]
and let \(V_2\subset R_5\) and \(W_2\subset R_4\) be their subspaces of
Boolean functions of ANF degree at most two.

\begin{lemma}[Collapse kernel]
\label{lem:collapse-kernel}
The substitution \(x_4\mapsto x_0\) restricts to a surjective linear map
\[
    \rho:V_2\longrightarrow W_2.
\]
Its image has dimension \(11\), its kernel has dimension \(5\), and
\begin{equation}
\label{eq:kernel-factorization}
 \ker\rho=(x_0+x_4)\operatorname{span}_{\mathbb F_2}
          \{1,x_0,x_1,x_2,x_3\}.
\end{equation}
Consequently every fixed collapsed rule \(g\in W_2\) has exactly \(2^5=32\)
degree-at-most-two five-slot lifts.
\end{lemma}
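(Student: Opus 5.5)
The plan is to work entirely with monomial bases. First I check that \(\rho\) is well defined and linear. Substituting \(x_4\mapsto x_0\) is a ring homomorphism \(R_5\to R_4\): it sends each relation \(x_i^2+x_i\) to a relation of the same form. It cannot raise degree. For example, \(x_0x_4\mapsto x_0^2=x_0\), which has degree one, and \(x_ix_4\mapsto x_ix_0\) for \(i\ne0\). Hence \(\rho(V_2)\subseteq W_2\), and \(\rho\) is \(\mathbb F_2\)-linear.

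Next I record the dimensions. The monomial basis of \(V_2\) is \(1\), the five \(x_i\), and the ten \(x_ix_j\), so \(\dim V_2=16\). The monomial basis of \(W_2\) is \(1\), four linear monomials and six quadratic monomials, so \(\dim W_2=11\).

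Surjectivity is immediate. Every monomial of \(W_2\) is itself a monomial of \(V_2\) not involving \(x_4\), and \(\rho\) fixes it. So \(\rho\) restricted to the span of those eleven monomials is the identity onto \(W_2\). The image therefore has dimension \(11\), and rank–nullity gives \(\dim\ker\rho=16-11=5\).

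For the kernel description \eqref{eq:kernel-factorization}, I first show containment of the right-hand side. Each \((x_0+x_4)m\) with \(m\in\{1,x_0,x_1,x_2,x_3\}\) has degree at most two in \(R_5\). Note that \((x_0+x_4)x_0=x_0+x_0x_4\) uses \(x_0^2=x_0\). Each such element maps to \((x_0+x_0)\rho(m)=0\).

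Then I check that these five elements are linearly independent. Their expansions are \(x_0+x_4\), \(x_0+x_0x_4\), and \(x_0x_1+x_1x_4\), \(x_0x_2+x_2x_4\), \(x_0x_3+x_3x_4\). Each contains a distinct monomial involving \(x_4\), namely \(x_4\), \(x_0x_4\), \(x_1x_4\), \(x_2x_4\), \(x_3x_4\), which occurs in no other element of the list. This gives independence. By the dimension count the five elements span \(\ker\rho\), and the equality follows.

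For the consequence, the fibre of \(\rho\) over any \(g\in W_2\) is a coset of \(\ker\rho\), nonempty by surjectivity. So every collapsed rule has exactly \(2^5=32\) lifts in \(V_2\). The only real subtlety is the degree bookkeeping under the Boolean relation \(x_0^2=x_0\). This is what makes \(x_0x_4\) drop degree, and it is why the factor \(x_0\) appears in the kernel basis. Otherwise the argument is routine linear algebra.
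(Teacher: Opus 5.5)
Your proof is correct and follows the paper's argument essentially step for step. Both use linearity and surjectivity of the substitution, the dimension count \(16-11=5\), the kernel containment of the five products \((x_0+x_4)m\), independence via the marker monomials \(x_4,x_0x_4,x_1x_4,x_2x_4,x_3x_4\), and fibres as cosets of the kernel. Your explicit check that \(x_4\mapsto x_0\) respects the Boolean relations and cannot raise degree is a small addition that the paper leaves implicit.
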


\begin{proof}
The map is evaluation under the Boolean-ring substitution \(x_4=x_0\), hence
is linear.  It is surjective because every degree-at-most-two polynomial in
\(x_0,\ldots,x_3\) is the image of the same polynomial viewed as an element of
\(V_2\).  Therefore
\[
   \dim V_2=1+5+\binom52=16,
   \qquad
   \dim W_2=1+4+\binom42=11,
\]
so \(\dim\ker\rho=5\).

Set \(e=x_0+x_4\).  Each of
\[
   e,\quad ex_0,\quad ex_1,\quad ex_2,\quad ex_3
\]
vanishes after the substitution \(x_4=x_0\), so the right-hand side of
\eqref{eq:kernel-factorization} lies in \(\ker\rho\).  These five functions
are independent: in their ANF expansions, the monomials
\(x_4,x_0x_4,x_1x_4,x_2x_4,x_3x_4\), respectively, occur in only one of the
five displayed generators.  Thus the displayed subspace has dimension five
and hence equals the kernel.  Every fibre of a surjective linear map is a
coset of the kernel and therefore has \(32\) elements.
\end{proof}

The exhaustive collapse atlas in the project repository independently reproduces
these dimensions and fibres, but it is regression only and is not used in the
proof.

\subsection{Structural classification on four cells}

Write the collapsed ANF as
\begin{equation}
\label{eq:collapsed-anf}
    g=e+L+Q,
\end{equation}
where
\[
    L=l_0x_0+l_1x_1+l_2x_2+l_3x_3,
    \qquad
    \lambda=l_0+l_1+l_2+l_3,
\]
and \(Q=\sum_{p<r}q_{pr}x_px_r\) is homogeneous quadratic.  Let
\begin{equation}
\label{eq:W-def}
\begin{split}
 W=\{&0,\\
 &(x_0+x_1)(x_2+x_3),\\
 &(x_0+x_2)(x_1+x_3),\\
 &(x_0+x_3)(x_1+x_2)\}.
\end{split}
\end{equation}
The three nonzero members correspond to the three Hamilton cycles of
\(K_4\).

\begin{theorem}[Ring-four structural classification]
\label{thm:ring4-classification}
The map \(G_4\) is bijective if and only if
\begin{equation}
\label{eq:ring4-criterion}
      \lambda=1\qquad\text{and}\qquad Q\in W.
\end{equation}
The constant coefficient \(e\) is unrestricted.
\end{theorem}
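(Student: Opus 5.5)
The plan exploits the rotation symmetry of the collapsed map.  With \(x_{i-2}=x_{i+2}\), the \(i\)-th output of \(G_4\) is \(g\) evaluated on a cyclic rotation of the whole configuration.  Hence \(G_4\) commutes with the rotation \(\tau\) of \(\mathbb F_2^4\), and we can write
\[
G_4(x)=e\mathbf 1+A x+\mathcal Q(x),
\]
where:
\begin{itemize}[leftmargin=2.0em,itemsep=0.2em]
  \item \(A\) is the circulant matrix attached to \(l(t)=l_0+l_1t+l_2t^2+l_3t^3\) in \(\mathbb F_2[t]/(t^4+1)\);
  \item \(\mathcal Q(x)_i=Q(\tau^i x)\) is the cyclic quadratic map.
\end{itemize}
The constant \(e\) only composes \(G_4\) with a translation, so it never affects bijectivity.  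This disposes of the last sentence of Theorem~\ref{thm:ring4-classification}, and from now on \(e=0\).

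The first step handles the linear part.  Since \(t^4+1=(t+1)^4\) over \(\mathbb F_2\), the ring \(\mathbb F_2[t]/(t^4+1)\) is local with maximal ideal \((t+1)\).  So \(A\) is invertible iff \(l(1)=\lambda=1\).

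\textbf{Necessity of \(\lambda=1\).}  Since \(G_4\) commutes with \(\tau\), a bijective \(G_4\) must permute \(\operatorname{Fix}(\tau)=\{0000,1111\}\) and \(\operatorname{Fix}(\tau^2)=\{0000,1111,0101,1010\}\).
\begin{itemize}[leftmargin=2.0em,itemsep=0.2em]
  \item On \(\operatorname{Fix}(\tau)\) we have \(g(0000)=0\) and \(g(1111)=\lambda+|Q|\), where \(|Q|=\sum q_{pr}\).  This forces \(\lambda+|Q|=1\).
  \item On \(\operatorname{Fix}(\tau^2)\), evaluate \(g\) at \(0101\) and \(1010\).  The values involve \(l_1+l_3\), \(l_0+l_2\), \(q_{13}\) and \(q_{02}\).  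Requiring that the four images be distinct gives further parity constraints.
\end{itemize}
I would then check that \(\lambda=0\) together with these constraints still yields a collision among the twelve period-four points.  I expect this to be a short case split, because with \(\lambda=0\) the linear part \(A\) kills \(\mathbf 1\) and maps into the augmentation ideal.

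\textbf{Sufficiency.}  Assume \(\lambda=1\) and \(Q\in W\).  Then \(G_4=A\,(\mathrm{id}+A^{-1}\mathcal Q)\), so it suffices to show that \(\Phi(x)=x+A^{-1}\mathcal Q(x)\) is bijective.  For \(Q=(x_0+x_1)(x_2+x_3)\) and its two siblings, each \(\mathcal Q(x)_i\) depends only on a pair of ``difference'' coordinates.  I would try to prove the invariance \(\mathcal Q(x+A^{-1}\mathcal Q(x))=\mathcal Q(x)\); this needs some compatibility between \(A^{-1}\) and the span of the difference vectors.  If it holds, then \(\Phi\) has inverse \(y\mapsto y+A^{-1}\mathcal Q(y)\), and bijectivity follows.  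If that invariance fails for general \(A\), the fallback uses the rotation/reflection symmetry group, together with the substitution \(x\mapsto x+\mathbf 1\), to reduce to a few normal forms.  Note that \(x\mapsto x+\mathbf 1\) preserves each member of \(W\) and changes neither \(L\) nor the parity constraints.  Each normal form is then verified directly on the \(16\) inputs.

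\textbf{Necessity of \(Q\in W\), the main obstacle.}  Take \(\lambda=1\) and \(Q\notin W\); the target is an explicit collision.  The plan has three parts.
\begin{enumerate}[label=(\alph*),leftmargin=2.0em,itemsep=0.2em]
  \item Quotient the \(64\) quadratic parts \(Q\) by the dihedral action on indices and by complementation \(x\mapsto x+\mathbf 1\), which alters \(L\) by the derivative of \(Q\) but preserves \(\lambda\) modulo a term I would track.
  \item First use the parity condition \(\lambda+|Q|=1\) from the fixed points.  Since \(\lambda=1\), this kills every \(Q\) with \(|Q|\) odd.
  \item For each remaining orbit representative, exhibit two period-four configurations with the same image, uniformly in the linear part \(L\).
\end{enumerate}
The difficulty is making the collision in (c) independent of \(L\), which ranges over eight choices with \(\lambda=1\).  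I would look for pairs \(x,y\) with \(x+y\) in a rotation-invariant subspace annihilated by \(A-I\), so that the linear contributions cancel up to a controllable term.  As a consistency check, I would compare with the exhaustive count implied later: \(63\,456\) non-bijective lifts out of \(65\,472\).
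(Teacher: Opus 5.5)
There is a genuine gap. Your fixed-point computations are correct, and they extract exactly the two constraints the paper's proof uses: \(\operatorname{Fix}(\tau)\) gives \(\lambda+m=1\) with \(m=|Q|\bmod 2\), and \(\operatorname{Fix}(\tau^2)\) gives \(\lambda+q_{02}+q_{13}=1\), hence \(\alpha=q_{01}+q_{03}+q_{12}+q_{23}=0\).

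\textbf{Necessity of \(Q\in W\).} The gap is step (c): for \(\lambda=1\), \(m=0\), \(Q\notin W\) you have no collision mechanism, and you acknowledge that uniformity in \(L\) is unresolved. The missing idea is to collide complementary pairs \(\{x,x+\mathbf 1\}\). Since \(A\mathbf 1=\lambda\mathbf 1\), the linear part contributes the same constant for every \(L\). Polarizing \(Q\) gives the affine identity \(G_4(x+\mathbf 1)+G_4(x)=(\lambda+m)\mathbf 1+C_dx\). Here \(d_p\) is the parity of the degree of vertex \(p\) in the support graph of \(Q\), and \(C_d\) is the corresponding circulant; this is the ``derivative of \(Q\)'' you mention in (a) but never exploit. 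With \(\lambda=1\), \(m=0\), a collision is just a solution of \(C_dx=\mathbf 1\), which depends on \(Q\) alone. In the chain ring \(\mathbb F_2[z]/(z^4+1)=\mathbb F_2[z]/((z+1)^4)\), every nonzero element generates an ideal containing \((z+1)^3=1+z+z^2+z^3\), i.e.\ \(\mathbf 1\). So such an \(x\) exists unless \(d=0\). Finally, \(d=0\) and \(m=0\) mean the support of \(Q\) is an Eulerian subgraph of \(K_4\) with an even number of edges, namely empty or a Hamilton cycle, i.e.\ \(Q\in W\).

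\textbf{Necessity of \(\lambda=1\).} This step is also left open, because your augmentation-ideal remark controls only \(A\). You need the parity identity \(\Pi(G_4(x))=\lambda\Pi(x)+\alpha w(x)\), with \(w=(x_0+x_2)(x_1+x_3)\). In the sum over cells, pairs at cyclic distance two cancel and adjacent pairs contribute \(w\). Then \(\lambda=0\) together with your \(\alpha=0\) forces the image into the even-weight hyperplane.

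\textbf{Sufficiency.} The invariance \(\mathcal Q(x+A^{-1}\mathcal Q(x))=\mathcal Q(x)\) you hope for is false already with \(A=I\) (the paper's \(L=x_2\)) and \(Q=(x_0+x_1)(x_2+x_3)\). At \(x=(0,1,1,0)\) one has \(\mathcal Q(x)=(1,0,1,0)\) and \(x+\mathcal Q(x)=(1,1,0,0)\), but \(\mathcal Q(1,1,0,0)=(0,1,0,1)\ne\mathcal Q(x)\).

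What is true is that for nonzero \(Q\in W\) the map \(\mathcal Q_Q\) takes values in \(U=\operatorname{Fix}(\tau^2)\). On \(U\), \(A\) (hence \(A^{-1}\)) acts as the identity or as the swap \((a,b,a,b)\mapsto(b,a,b,a)\), so \(A^{-1}\mathcal Q_Q=\mathcal Q_{Q'}\) for some nonzero \(Q'\in W\). The shear \(x\mapsto x+\mathcal Q_{Q'}(x)\) is an involution for \(Q'=(x_0+x_2)(x_1+x_3)\). The shears of the other two Hamilton forms are inverse to each other rather than self-inverse, which is exactly why your invariance fails.

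Your fallback of checking normal forms on \(16\) inputs would go through, but it is a computation rather than an argument. Also, the complementation symmetry you invoke buys nothing. For \(\lambda=1\), \(Q\in W\) it gives \(G_4(x+\mathbf 1)=G_4(x)+\mathbf 1\), so it identifies no two normal forms. In (a) it fixes \(Q\), so it cannot shrink the list of \(64\) quadratic parts.
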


\begin{proof}
Let \(\mathbf 1=(1,1,1,1)\),
\(\Pi(x)=x_0+x_1+x_2+x_3\), and set
\[
\begin{aligned}
 u&=(x_0+x_1)(x_2+x_3),\\
 v&=(x_0+x_3)(x_1+x_2),\\
 w&=(x_0+x_2)(x_1+x_3).
\end{aligned}
\]
If
\(\alpha=q_{01}+q_{03}+q_{12}+q_{23}\), summing the four output coordinates
of \(G_4\) gives the identity
\begin{equation}
\label{eq:parity-identity}
   \Pi(G_4(x))=\lambda\Pi(x)+\alpha w(x).
\end{equation}
Indeed, each linear slot contributes one copy of \(\Pi(x)\); quadratic pairs
at cyclic distance two cancel in the four-cell sum, whereas the four pairs at
cyclic distance one or three contribute \(w\).  The four Boolean functions
\(0,w,\Pi,\Pi+w\) have weights \(0,4,8,12\), respectively.  If \(G_4\) is a
permutation then \(\Pi\circ G_4\) has the same weight as \(\Pi\), so
\(\lambda=1\) and \(\alpha=0\).

Let \(E=\{\{p,r\}:q_{pr}=1\}\),
\(m=|E|\bmod2\), and let
\(d_p=\deg_E(p)\bmod2\).  If \(C_d\) denotes the circulant linear map
\[
   (C_d x)_i=\sum_{p=0}^3 d_p x_{i+p-2},
\]
then complementing all four input cells yields
\begin{equation}
\label{eq:complement-derivative}
  G_4(x+\mathbf1)+G_4(x)
   =(\lambda+m)\mathbf1+C_d x.
\end{equation}
With \(\lambda=1\), putting \(x=0\) shows that a permutation must have
\(m=0\), since \(m=1\) would give
\(G_4(\mathbf1)=G_4(0)\).

Also \(\sum_p d_p=0\), so \(d\) has even weight.  Identify
\(\mathbb F_2^4\) with
\(R=\mathbb F_2[z]/(z^4-1)=\mathbb F_2[z]/((z+1)^4)\).  The map \(C_d\) is
multiplication by \(D(z)=\sum_p d_p z^{p-2}\), up to a unit power of \(z\).
For nonzero even-weight \(d\), the \((z+1)\)-adic order of \(D\) is one of
\(1,2,3\), so its image contains
\(1+z+z^2+z^3=(z+1)^3\), the vector \(\mathbf1\).  Thus, if \(d\ne0\), there
is \(x\) with \(C_dx=\mathbf1\), and
\eqref{eq:complement-derivative} gives
\(G_4(x+\mathbf1)=G_4(x)\), a contradiction.  Hence \(d=0\).

The condition \(d=0\) means every vertex of the support graph \((\{0,1,2,3\},E)\)
has even degree.  The Eulerian subgraphs of \(K_4\) are the empty graph, the
four triangles, and the three Hamilton cycles.  The additional condition
\(m=0\) excludes the triangles, which have three edges.  Thus \(Q\in W\).
This proves necessity.

For sufficiency, let \(\sigma\) be cyclic shift and write the linear part of
the global map as
\[
    A=\sum_{p=0}^3 l_p\sigma^{p-2}.
\]
In the same chain ring \(R\), \(A\) is multiplication by a Laurent polynomial
whose value at \(z=1\) is \(\lambda\).  Hence \(A\) is invertible exactly when
\(\lambda=1\).

Let
\[
   U=\{x:\sigma^2x=x\}
     =\{(a,b,a,b):a,b\in\mathbb F_2\}.
\]
For the three nonzero forms in \(W\), the quadratic global maps are,
respectively,
\[
   H_1=(u,v,u,v),\qquad
   H_2=(w,w,w,w),\qquad
   H_3=(v,u,v,u),
\]
and all take values in \(U\).  Put \(P_j=I+H_j\).  Since
\(w\Pi=0\), a direct substitution gives
\[
      P_3P_1=I,\qquad P_2^2=I,
\]
so all three shears are permutations.  On \(U\), the invertible circulant
\(A\) is either the identity or the transposition of the two alternating basis
vectors; accordingly it fixes \(H_2\) and either fixes or exchanges
\(H_1,H_3\).  Therefore, for each nonzero \(Q\in W\), one may choose
\(Q'\in W\setminus\{0\}\) so that
\(A(H_{Q'}(x))=H_Q(x)\) for all \(x\).  Hence
\[
    G_4=T_{e\mathbf1}\circ A\circ P_{Q'},
\]
a composition of bijections.  When \(Q=0\), simply
\(G_4=T_{e\mathbf1}\circ A\).  This proves sufficiency.
\end{proof}

It follows that there are exactly
\begin{equation}
\label{eq:ring4-signatures}
   2\cdot4\cdot8=64
\end{equation}
collapsed permutation signatures: two constants, four choices of \(Q\in W\),
and eight odd-parity linear forms.

\section{Exceptional lifts and the exact population split}
\label{sec:exceptional-lifts}

Lemma~\ref{lem:collapse-kernel} gives 32 degree-at-most-two five-slot lifts of
each collapsed signature.  In the \(Q=0\) case we may choose an affine lift of
the collapsed rule and write every lift as
\begin{equation}
\label{eq:kernel-parametrization}
  f_t=f_0+(x_0+x_4)h_t,
  \qquad
  h_t=t_0+t_1x_0+t_2x_1+t_3x_2+t_4x_3.
\end{equation}
The lift is affine precisely when
\((t_1,t_2,t_3,t_4)=0\), giving two affine values of \(t_0\) and therefore
30 exactly quadratic lifts per \(Q=0\) collapsed signature.  If \(Q\ne0\),
every lift is exactly quadratic because its collapse already has nonzero
quadratic part.

Among the 64 collapsed permutation signatures, 16 have \(Q=0\) and 48 have
\(Q\ne0\).  Consequently the number of exactly quadratic rules surviving the
ring-four obstruction is
\begin{equation}
\label{eq:exceptional-count}
   16\cdot30+48\cdot32=480+1536=2016.
\end{equation}
The complementary population is therefore
\begin{equation}
\label{eq:easy-count}
   65\,472-2\,016=63\,456.
\end{equation}
Every rule counted in \eqref{eq:easy-count} already has a collision on the
four-cell ring.

\begin{corollary}[Population reduction]
\label{cor:population-reduction}
The exactly quadratic five-slot family splits into \(63\,456\) rules with a
ring-four collision and \(2\,016\) exceptional lifts.  The exceptional set is
the disjoint union of a \(Q=0\) layer of size \(480\) and a \(Q\ne0\) layer
of size \(1\,536\).
\end{corollary}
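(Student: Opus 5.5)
The corollary is bookkeeping on top of Lemma~\ref{lem:collapse-kernel} and Theorem~\ref{thm:ring4-classification}. I would organize it around the fibres of the collapse map \(\rho:V_2\to W_2\). Every exactly quadratic five-slot rule \(f\) has \(\rho(f)\in W_2\). The ring-four global map of \(f\) is, by the slot-coincidence convention, exactly \(G_4\) for \(g=\rho(f)\). So \(f\) has a ring-four collision if and only if \(g\) fails criterion~\eqref{eq:ring4-criterion}, since injectivity and bijectivity coincide on the finite set \(\mathbb F_2^4\). The exceptional set is therefore the union, over the \(64\) permutation signatures counted in~\eqref{eq:ring4-signatures}, of the exactly quadratic members of the fibre \(\rho^{-1}(g)\). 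The complement is precisely the set of rules with a ring-four collision.

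Next I would count each fibre. By Lemma~\ref{lem:collapse-kernel}, each fibre is a coset \(f_0+\ker\rho\) of size \(32\), parametrized as in~\eqref{eq:kernel-parametrization}. If \(Q\ne0\), every element of the fibre has degree exactly two. It has degree at most two because it lies in \(V_2\). It has degree at least two because substitution cannot create quadratic terms from a degree \(\le1\) polynomial, so an affine \(f\) would have affine \(\rho(f)\).

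If \(Q=0\), I would choose \(f_0\) affine, namely \(g\) itself viewed in \(V_2\). Then \(f_t\) is affine iff \((x_0+x_4)h_t\) has no quadratic monomials. The terms \(x_0x_4,x_1x_4,x_2x_4,x_3x_4\) each arise from a single parameter (\(t_1,\dots,t_4\) respectively), so this forces \(t_1=\dots=t_4=0\). That leaves \(2\) affine lifts and \(30\) exactly quadratic ones.

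It remains to count signatures by \(Q\). The count \(2\cdot4\cdot8\) splits as \(2\cdot1\cdot8=16\) with \(Q=0\) and \(2\cdot3\cdot8=48\) with \(Q\ne0\). This gives \(16\cdot30=480\) and \(48\cdot32=1536\), summing to \(2016\). The two layers are disjoint because distinct fibres are disjoint and \(Q\) is determined by \(\rho(f)\). Subtracting from the total~\eqref{eq:quadratic-population} gives \(63\,456\).

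The only point needing care, which I expect to be the main obstacle, is confirming that the \(8\) odd-parity linear forms and \(2\) constants are genuinely free and independent of the \(Q\)-choice in the signature count. This holds because Theorem~\ref{thm:ring4-classification} imposes no coupling between \(L\) and \(Q\) beyond \(\lambda=1\) and \(Q\in W\). It also requires that distinct \((e,L,Q)\) give distinct \(g\), which follows from uniqueness of the ANF.
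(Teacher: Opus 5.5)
Your proposal is correct and follows essentially the same route as the paper. The paper also uses fibres of \(\rho\) of size \(32\) over the \(64\) permutation signatures of Theorem~\ref{thm:ring4-classification}, an affine base lift giving \(30\) exactly quadratic lifts when \(Q=0\), all \(32\) lifts exactly quadratic when \(Q\ne0\), and \(16\cdot30+48\cdot32=2016\) subtracted from \(65\,472\); your monomial argument that \(f_t\) is affine exactly when \((t_1,\dots,t_4)=0\) just supplies a step the paper states without detail.
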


This population arithmetic is derived from
Theorem~\ref{thm:ring4-classification} and
Lemma~\ref{lem:collapse-kernel}.  The historical 65,472-rule full-rule archive is not
a premise of Corollary~\ref{cor:population-reduction}; it is retained only as
an independent regression archive.

\section{Reduction to two finite certificate layers}
\label{sec:certificate-reduction}

After Corollary~\ref{cor:population-reduction}, only the \(2\,016\) exceptional
lifts require witnesses beyond ring four.  The proof treats their two layers
separately.

For \(Q=0\), Section~\ref{sec:q0-layer} reduces the \(480\) rules to eight
constant-free representatives and a finite family of 136 certified
constructions.  Each construction supplies compatible collision walks of
lengths 8 and 12 with a common base vertex.  For \(Q\ne0\),
Section~\ref{sec:qnonzero-layer} uses 768 representative-lift certificates,
each explicitly recording such a common-base pair of walks; output complement
doubles these representatives to all \(1\,536\) rules of that layer.

Thus the load-bearing finite machine core contains exactly
\begin{equation}
\label{eq:machine-core}
      136+768=904
\end{equation}
explicit finite exhibits.  Section~\ref{sec:gluing} turns the two witness
lengths into all multiples of four from eight onward, and
Section~\ref{sec:verification} states the exact trusted-computation boundary.
No collapse-atlas fact and no full-rule archive record is part of this core.

\section{The exceptional layer with \texorpdfstring{$Q=0$}{Q=0}}
\label{sec:q0-layer}

We first treat the exceptional collapsed signatures whose four-variable
quadratic part vanishes.  There are sixteen such signatures: the constant
coefficient is arbitrary and the linear form has odd coefficient sum.  The
output constant can be removed without changing collisions.

\begin{lemma}[Output complement]
\label{lem:output-complement}
Let \(\bar f=f+1\).  For every \(n\ge1\),
\[
        F_{\bar f,n}(x)=F_{f,n}(x)+\mathbf 1_n.
\]
Consequently \(F_{\bar f,n}(x)=F_{\bar f,n}(y)\) if and only if
\(F_{f,n}(x)=F_{f,n}(y)\), and \(F_{\bar f,n}\) is injective if and only if
\(F_{f,n}\) is injective.
\end{lemma}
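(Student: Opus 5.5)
The plan is to argue coordinatewise directly from the definition \eqref{eq:finite-global-map} of the periodic global map; no graph machinery is needed. Fix \(n\ge1\) and \(x\in\mathbb F_2^n\). For each index \(i\) (read modulo \(n\)), the \(i\)-th output of \(F_{\bar f,n}\) is obtained by evaluating \(\bar f\) on the five formal slot values \((x_{i-2},x_{i-1},x_i,x_{i+1},x_{i+2})\). This holds even when \(n<5\) and several slots refer to the same cell, because of the slot-coincidence convention of Section~\ref{sec:preliminaries}: we always evaluate the original five-variable function on the repeated values. Since \(\bar f=f+1\) as Boolean functions, that value equals \(f(x_{i-2},\dots,x_{i+2})+1=(F_{f,n}(x))_i+1\). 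Collecting all \(n\) coordinates gives
\[
F_{\bar f,n}(x)=F_{f,n}(x)+\mathbf 1_n .
\]

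For the collision statement, I will add \(\mathbf 1_n\) to both sides of \(F_{\bar f,n}(x)=F_{\bar f,n}(y)\). Translation by the fixed vector \(\mathbf 1_n\) is a bijection of \(\mathbb F_2^n\) and is its own inverse, so the equivalence holds in both directions. Thus \(F_{\bar f,n}=T_{\mathbf 1_n}\circ F_{f,n}\). A composition with a bijection is injective exactly when the inner map is, which gives the injectivity equivalence. Equivalently, the two maps have identical collision pairs \((x,y)\).

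There is no real obstacle here. The only point needing care is the first step for small \(n\): one must confirm that the identity \(\bar f=f+1\) survives the substitution of repeated cell values. It does, because adding the constant \(1\) commutes with any substitution of variables; for instance, at \(n=4\), \(\rho(\bar f)=\rho(f)+1\). I would also note for later use that the pair-graph edge condition \eqref{eq:equal-output-edge} is unchanged under \(f\mapsto f+1\), so the pair graphs of \(f\) and \(\bar f\) coincide. This lets the certificates for constant-free representatives cover their complements.
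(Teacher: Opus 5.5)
Your proposal is correct and follows the paper's own argument: adding the constant $1$ to the local rule flips every output coordinate, including under slot coincidence for small $n$, so $F_{\bar f,n}=T_{\mathbf 1_n}\circ F_{f,n}$ and collision pairs and injectivity are preserved. Your spelled-out treatment of the small-$n$ substitution and your remark that the pair graphs of $f$ and $\bar f$ coincide are accurate elaborations of the same idea.
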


\begin{proof}
Adding the constant one to the local rule toggles every output coordinate and
nothing else.  Equality of two output vectors is therefore preserved.
\end{proof}

\subsection{Kernel parametrization}

Fix one of the eight odd four-variable linear forms \(L\), take constant
coefficient zero, and choose the affine five-slot representative \(f_0=L\)
that ignores \(x_4\).  By Lemma~\ref{lem:collapse-kernel}, all lifts of the
same collapsed signature have the form
\begin{equation}
\label{eq:q0-param}
  f_t=f_0+(x_0+x_4)h_t,
  \qquad
  h_t=t_0+t_1x_0+t_2x_1+t_3x_2+t_4x_3,
  \qquad t\in\mathbb F_2^5.
\end{equation}
The two parameters with \((t_1,t_2,t_3,t_4)=0\) are affine, while the other
thirty are exactly quadratic.  Thus the eight constant-free representatives
account for \(8\cdot30=240\) exact-quadratic lifts.  Lemma~\ref{lem:output-complement}
doubles them to the complete \(Q=0\) exceptional layer of size \(480\).

\subsection{Parametric collision regions}
\label{subsec:q0-regions}

The finite certificates used below are not the output of a search that must be
trusted at verification time.  Their compression rests on a simple affine
observation.  Fix a candidate pair-graph edge, so the two five-bit input
windows \(z,z'\in\mathbb F_2^5\) are fixed.  By \eqref{eq:q0-param}, the edge
condition is
\begin{equation}
\label{eq:q0-edge-affine}
 f_t(z)+f_t(z')
 =f_0(z)+f_0(z')
   +\sum_{j=0}^{4}t_j\bigl(k_j(z)+k_j(z')\bigr)=0,
\end{equation}
where \(k_0,\ldots,k_4\) are the five kernel generators in
\eqref{eq:kernel-factorization}.  Hence the set of parameters on which a fixed
edge is legal is an affine subspace (possibly empty) of \(\mathbb F_2^5\).
The same is true for a fixed finite walk, by intersecting its edge conditions.

For the shipped construction one starts with a collision pair
\((x,y)\in(\mathbb F_2^8)^2\) and appends four-bit blocks \(u,v\) to obtain
\(x\Vert u,y\Vert v\in\mathbb F_2^{12}\).  The four-cell contexts at the
chosen start position are inherited from \(x,y\); consequently the associated
length-8 and length-12 pair-graph walks have the same base vertex.  For fixed
\((x,y,u,v)\), legality of both walks is again a system of affine equations in
\(t\), and therefore defines an explicitly checkable parameter region.

The frozen $Q=0$ certificate artifact stores 136 such region-level constructions.  Regions
may overlap; what matters is exact coverage of the thirty quadratic parameters
for each of the eight representatives.

\begin{proposition}[Finite certificates for the \(Q=0\) layer]
\label{prop:q0-certificates}
For each of the eight constant-free odd-linear representatives and every one
of its thirty exactly quadratic kernel lifts, at least one of the 136 shipped
certificate constructions applies.  Every applicable construction gives a
closed differentiating walk of length \(8\) and a closed differentiating walk
of length \(12\) in the pair graph of the lift, and the two walks have a common
base vertex.  The union of the certified regions for each representative is
exactly its set of thirty exactly quadratic parameters.
\end{proposition}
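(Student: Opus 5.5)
The plan is to verify this proposition as a finite certificate check, organized so that every step is a direct evaluation rather than a search. I would first fix the data format of a single construction: a representative index \(L\) (one of the eight odd linear forms), the collision words \(x,y\in\mathbb F_2^8\), the appended blocks \(u,v\in\mathbb F_2^4\), a start position, and a claimed parameter region \(S\subseteq\mathbb F_2^5\), given as an affine subspace by generators or by defining equations. For each construction I would read off the base vertex from the four-cell contexts of \(x,y\) at the start position. Then I would list the eight edges of the length-8 walk of \((x,y)\) and the twelve edges of the length-12 walk of \((x\Vert u,y\Vert v)\). Closure of both walks at the same base vertex is automatic, because the walks are just the cyclic context slides of Proposition~\ref{prop:closed-walk-collision}. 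The only point to confirm is that the length-12 contexts at the start position really coincide with those of the length-8 walk. This holds because the four cells preceding the start lie in \(x\) (respectively \(y\)) in both words; otherwise the start position must be chosen so that this is true.

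Next I would compute the legality region. By \eqref{eq:q0-edge-affine}, each of the twenty edges imposes one affine equation \(\sum_j t_j(k_j(z)+k_j(z'))=f_0(z)+f_0(z')\) in \(t\), so the set of \(t\) for which both walks are legal is the solution set \(A\) of a \(20\times5\) linear system over \(\mathbb F_2\). This can be found by Gaussian elimination, or simply by evaluating all 32 values of \(t\). The check then has two parts. First, the stored region must satisfy \(S\subseteq A\). Second, both walks must contain a differentiating edge, which amounts to checking \(x\neq y\) and \(x\Vert u\neq y\Vert v\); the second follows from the first. With these two checks, every \(t\in S\) yields the two closed differentiating walks with a common base, as the proposition claims.

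Finally comes coverage. For each representative \(L\), I would form the union of the regions \(S\) belonging to its constructions, intersected with the quadratic parameters \((t_1,\dots,t_4)\neq0\). This set should equal all thirty such \(t\). Any affine \(t\) that happens to lie in some region is irrelevant, since the statement is about the exactly quadratic parameters, and regions are allowed to overlap. Constructions whose regions lie entirely in the affine locus would be harmless but should not occur.

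The main obstacle is not logical but lies in making the certificate trustworthy and the coverage exact. The delicate points are the indexing conventions: the slot order in \eqref{eq:pair-edge}, how a window \(z\) is read off cyclically from the word, and the exact start position that makes the two walks share a base vertex. I would therefore check each edge against the definition \eqref{eq:finite-global-map} directly, by comparing \(F_{f_t,8}(x)=F_{f_t,8}(y)\) and \(F_{f_t,12}(x\Vert u)=F_{f_t,12}(y\Vert v)\) coordinatewise for every covered \(t\). This gives a convention-independent confirmation of the affine-region computation. Since only \(8\cdot30=240\) parameter values are involved, this exhaustive replay over all 32 values of \(t\) per construction is cheap and removes any reliance on the affine compression.
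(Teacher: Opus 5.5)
Your proposal is correct and follows essentially the same route as the paper's proof: a non-searching replay that rebuilds each lift \(f_t=L+(x_0+x_4)h_t\) from the parametrization and solves the affine edge conditions over the parameter cube. It then replays the length-\(8\) and length-\(12\) walks from the shared inherited base context (closure, edge legality, equal images, differentiation) and checks coverage for each representative. The one case you call \emph{irrelevant}, an affine parameter lying inside a region, is in fact ruled out by your own check \(S\subseteq A\): every affine lift \(L+t_0(x_0+x_4)\) has circulant value \(\lambda=1\) at \(z=1\), and \(z^8-1=(z+1)^8\), so it is bijective on \(\mathbb F_2^8\) and has no ring-8 collision, which yields the exact equality of the union with the thirty quadratic parameters that the statement asserts.
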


\begin{proof}[Finite verification]
The certificate archive records, for every construction, the representative,
the parameter region, the ring-8 words \((x,y)\), the appended four-bit blocks
\((u,v)\), and the resulting ring-12 words.  The standalone checker
\texttt{verifier/check\_q12i\_joinability.py} independently generates the
eight odd-parity labels \(L\), constructs from each label the canonical
constant-free affine base, reconstructs the fixed five-dimensional collapse
kernel and all \(32\) lifts, and requires the archive family to equal that
canonical lift list.  It checks the \(30/2\) quadratic--affine split and exact
set equality between the mathematically expected \(240\) constant-free
quadratic masks and the archive's \(240\) distinct physical masks, rejecting
cross-family duplicates, omissions, unexpected masks, and label/base
mismatches.  It also solves the printed region conditions over the parameter
cube; replays every recorded ring-8 and ring-12 walk and checks closure,
legality of every local edge, equality of the two ring images, and
differentiation; and checks coverage in both directions.  It does
not rank candidates, search for witnesses, or construct a new cover.  Thus the
machine-dependent content of the proposition is only the verification of the
136 explicit finite exhibits and their stated parameter regions.
\end{proof}

For orientation, one certificate in the archive has
\[
  L=x_3,\qquad
  t\in\{10,13,16,23\},\qquad
  (x,y)=(37,38),\qquad (u,v)=(2,2),
\]
with the integers read in the bit convention of the supplementary archive.
Its two stored words are \(549\) and \(550\) at length twelve.  This example
is illustrative only; Proposition~\ref{prop:q0-certificates} depends on the
whole checked archive, not on the example.

\section{The exceptional layer with \texorpdfstring{$Q\ne0$}{Q nonzero}}
\label{sec:qnonzero-layer}

The remaining collapsed signatures have one of the three nonzero Hamilton
quadratic forms in \(W\).  After fixing constant coefficient zero, a
representative is
\begin{equation}
\label{eq:qnonzero-base}
       f_0=L+Q,
\end{equation}
where \(L\) is one of the eight odd linear forms and \(Q\) is one of the
three Hamilton supports
\[
 (0,1,2,3),\qquad (0,1,3,2),\qquad (0,2,1,3),
\]
interpreted as four-cycle edge sets.  We index these supports by
\(q=0,1,2\), respectively.  Hence there are \(3\cdot8=24\) constant-free
representatives.  Every one of their 32 kernel lifts is exactly quadratic,
because the collapsed quadratic part is already nonzero.  This gives
\(24\cdot32=768\) representative lifts, and Lemma~\ref{lem:output-complement}
doubles them to the complete \(Q\ne0\) layer of size \(1536\).

\subsection{Quadratic-base reduction at ring eight}
\label{subsec:qnonzero-ring8}

The frozen hyperplane lemmas give a structural explanation for the ring-eight
collisions in this layer.  This material is explanatory rather than
load-bearing for the all-length result: the common-base certificates in the
next subsection contain their own length-8 walks.  We retain the reduction
because it shows why the finite layer is organized by elementary parity
conditions.

Write \(y=x+d\), let \(W_i(x)\) be the five-slot window at cell \(i\), and
let \(D=W_i(d)\).  If the base rule is \(c+L+Q\), polarization gives
\begin{equation}
\label{eq:GQ}
\begin{split}
 F_t(x)_i+F_t(x+d)_i
   ={}&L(D)+Q(D)+B_Q(W_i(x),D)\\
      &+\delta_i\,\ell_t(D_0,D_1,D_2,D_3)
       +\varepsilon_i\,h_t(W_i(x)+D),
\end{split}
\end{equation}
where
\[
 B_Q(z,w)=Q(z+w)+Q(z)+Q(w),\qquad
 \ell_t=t_1x_0+t_2x_1+t_3x_2+t_4x_3,
\]
\[
 \delta_i=x_{i-2}+x_{i+2},\qquad
 \varepsilon_i=d_{i-2}+d_{i+2}.
\]
For fixed defect \(d\) and kernel parameter \(t\), the right-hand side is
affine in the background window.  The ring-eight witness problem therefore
reduces to a small linear system rather than a nonlinear search over
backgrounds.

One representative case illustrates the mechanism.  For the four-periodic
defect \(A=00110011\), the first Hamilton support \(q=0\), and
\(p=l_1+l_3=1\), set
\[
  \alpha=t_3+t_4,\qquad \gamma=t_1+t_2.
\]
On the hyperplane \(t_2+t_4=0\), the choice
\[
   s=1+l_0+l_1,\qquad
   (u_0,u_1,u_2,u_3)=(1+\gamma,\gamma,1+\alpha,\alpha)
\]
solves the reduced ring-eight equations, where
\(u_j=x_j+x_{j+4}\) and \(s=x_0+x_1+x_2+x_3\).  On the complementary
hyperplane \(t_2+t_4=1\), a second defect family supplies an explicit
background.  The other parity cases are analogous: for \(p=0\) the split is
by \(t_1+t_3\), while the remaining two Hamilton supports split by
\(t_1+t_2+t_3+t_4\).  Appendix~\ref{app:ring8-witnesses} records the full
reduced system and all six explicit background formulas.  None of those
formulas is used as a premise of Proposition~\ref{prop:qnonzero-certificates}.

\subsection{Common-base certificates}

The all-length result for this layer uses a different finite object.  The
common-base certificate archive contains one record for every pair consisting of a constant-free
representative and a kernel parameter.  Each record stores one base vertex and
two explicit closed differentiating walks from that vertex, of lengths 8 and
12.

\begin{proposition}[Finite certificates for the \(Q\ne0\) layer]
\label{prop:qnonzero-certificates}
For each of the 768 constant-free representative lifts in the \(Q\ne0\)
exceptional layer, the shipped certificate archive contains a closed
differentiating length-8 walk and a closed differentiating length-12 walk with
the same base vertex.  The archive covers the expected 768
\((\text{representative},t)\) pairs exactly once.  By output complement, the
same collision statements hold for all \(1536\) rules of the layer.
\end{proposition}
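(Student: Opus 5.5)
The proposition is at bottom a finite verification statement, so I would split it into a population part, a replay part, and a transfer part.

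\emph{Population.} The expected index set is fixed analytically, before any archive is consulted. The twenty-four constant-free bases \(f_0=L+Q\) of \eqref{eq:qnonzero-base} come from the eight odd linear forms \(L\) with \(\lambda=1\) and the three Hamilton supports indexed by \(q=0,1,2\). Each base has the thirty-two kernel lifts \(f_t=f_0+(x_0+x_4)h_t\), \(t\in\mathbb F_2^5\), given by Lemma~\ref{lem:collapse-kernel}. This yields exactly \(768\) pairs \((\text{representative},t)\). All of these lifts are exactly quadratic, because \(\rho(f_t)=f_0\) already has nonzero quadratic part \(Q\), and a lift of degree at most one would collapse to something of degree at most one. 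Distinct pairs give distinct five-slot ANF vectors: the collapse recovers \((L,Q)\), and injectivity of the parametrization by the kernel basis recovers \(t\). Hence ``covers exactly once'' becomes a concrete set-equality check. The archive's \((\text{representative},t)\) keys must form a multiset with every expected key of multiplicity one and no foreign keys.

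\emph{Replay.} For each record, a standalone non-searching checker rebuilds \(f_t\) from \((L,q,t)\). It then reads the stored base vertex \((u,v)\in\mathcal C^2\) and the two label sequences \((a_k,b_k)\) of lengths \(8\) and \(12\). For every step it verifies the edge condition \eqref{eq:equal-output-edge} while updating the context by the shifts \(\sigma_a,\sigma_b\). At the end of each walk it checks closure, meaning the final vertex equals \((u,v)\), and checks that at least one label has \(a_k\ne b_k\). By Proposition~\ref{prop:closed-walk-collision}, applicable since \(8,12\ge5\), each such walk certifies a collision of \(F_{f_t,8}\), respectively \(F_{f_t,12}\). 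As an independent sanity check, the checker also reconstructs the two periodic word pairs and compares their global images directly. The common base vertex is recorded as data and simply compared across the two walks, because the later gluing argument will need it.

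\emph{Transfer to all \(1536\) rules.} The \(Q\ne0\) layer consists of the rules \(f_t\) and \(f_t+1\). The pair graph depends only on the equality relation \(f(z)=f(z')\), and Lemma~\ref{lem:output-complement} shows that adding the constant one leaves that relation unchanged. So the pair graph of \(f_t+1\) is identical to that of \(f_t\), and the same two walks with the same base vertex serve for the complemented rule. Corollary~\ref{cor:population-reduction} confirms that \(2\cdot768=1536\) exhausts the layer.

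The main obstacle is the trust boundary rather than the mathematics. The enumeration of expected keys must be derived independently of the archive, so that a missing or duplicated parameter cannot slip through. The bit conventions for windows and labels must also match \eqref{eq:finite-global-map} exactly, which is why the checker should recompute global images directly and not rely only on the edge test. If any record failed, I would fall back on the ring-eight reduction \eqref{eq:GQ}. That reduction supplies the length-8 walk for that case, and I would then seek a length-12 extension sharing its base by appending four-bit blocks, as in the \(Q=0\) construction.
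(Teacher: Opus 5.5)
Your proposal is correct and follows essentially the paper's finite-verification proof: an archive-independent reconstruction of the \(24\) representatives and their \(32\) kernel lifts, per-record replay of edge legality, closure, differentiation, word reconstruction and image equality for both walks from the stored common base, exact coverage of the \(768\) keys, and output-complement doubling, which you justify directly at the pair-graph level while the paper's checker additionally verifies that identity computationally. Your ring-eight fallback is not needed, and it could not establish the proposition as stated anyway, because the proposition is a claim about the shipped archive itself.
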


\begin{proof}[Finite verification]
The standalone checker \texttt{verifier/check\_q12jc\_joinability.py} imports
no project module and performs no witness search.  It rebuilds the local truth
table and ANF degree, the ring-8 and ring-12 maps, the pair graph, the ring-four
collapse population, the 24 constant-free representatives and their 32 kernel
lifts, and the owner partition used only as certificate metadata.  For each
record and each of its two walks it verifies the vertex evolution from the
labels, closure, legality of every local edge, the exact differentiating
positions, reconstruction of the two ring words, distinctness, and equality of
their global images.  It then checks exact coverage of all 768 expected pairs
and verifies the output-complement identity used for the doubling.  The
certificate records, rather than any earlier search or fitted-region data, are
the finite premise of the proposition.
\end{proof}

A typical archive record contains the representative data \((Q,L,t)\), a base
pair-state, and two walk objects of the form
\[
  (\text{labels},\text{vertices},\text{two words},
    \text{differentiating positions}).
\]
For example, the first shipped record has base vertex \(0\), \(q=0\),
\(L=x_3\), \(t=0\); its ring-8 words are \((16,144)\), and the recorded
length-8 walk has its differentiating position at index \(3\).  The complete
archive is supplementary data; no list of 768 records is reproduced here.

\section{Common-base gluing and the finite-ring theorem}
\label{sec:gluing}

We now remove the finite lengths 8 and 12 from the statement.  This step is
purely graph-theoretic and arithmetic.

\begin{lemma}[Common-base gluing]
\label{lem:common-base-gluing}
Let a pair graph contain a closed differentiating walk
\(W_8:v\to v\) of length 8 and a closed differentiating walk
\(W_{12}:v\to v\) of length 12.  For any \(a,b\ge0\), not both zero,
\[
       W_8^aW_{12}^b
\]
is a closed differentiating walk at \(v\), of length \(8a+12b\).
\end{lemma}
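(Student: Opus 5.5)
The argument only uses the fact that walks in a directed graph compose when one ends where the next begins. First I fix the concatenation convention. A walk of length \(k\) from \(v\) is a sequence of vertices \(v=w_0\to w_1\to\dots\to w_k\), each consecutive pair joined by a labelled edge as in \eqref{eq:pair-edge}. Two walks \(P:v\to v\) and \(P':v\to v\) can be concatenated because the terminal vertex of \(P\) is the initial vertex of \(P'\). The result \(PP'\) is a walk \(v\to v\) of length \(|P|+|P'|\), and its edges are exactly the edges of \(P\) followed by those of \(P'\), with the same labels.

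By induction on \(a+b\), the word \(W_8^aW_{12}^b\) is therefore a well-defined closed walk at \(v\) of length \(8a+12b\). Legality is automatic. Each edge of the concatenation is an edge of \(W_8\) or \(W_{12}\), so each satisfies the output-equality condition \eqref{eq:equal-output-edge}, which depends only on the edge's source vertex and label and not on its position in the walk.

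For differentiation, I use that \(a,b\) are not both zero. The concatenation then contains at least one full copy of \(W_8\) or of \(W_{12}\), and by hypothesis each of these contains an edge with label \((a',b')\), \(a'\ne b'\). That edge persists in the concatenation, so \(W_8^aW_{12}^b\) has a differentiating edge.

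If the intended meaning of ``differentiating walk'' is that the two reconstructed coordinate words are distinct, I would appeal to Proposition~\ref{prop:closed-walk-collision}: containing a differentiating edge is equivalent to the two words being distinct. A differentiating edge at step \(j\) forces the \(j\)-th appended bits to differ. Closure of the concatenated walk is needed to read those bits as periodic words of length \(8a+12b\), and it holds because every factor returns to \(v\).

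There is no real obstacle here. The only point needing care is the bookkeeping that the common base vertex \(v\) makes every junction between factors consistent. Without a shared base, \(W_8\) and \(W_{12}\) could not be glued, which is why the certificates in Propositions~\ref{prop:q0-certificates} and~\ref{prop:qnonzero-certificates} insist on a common base.
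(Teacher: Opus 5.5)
Your proposal is correct and follows essentially the same route as the paper: the common base vertex $v$ makes every concatenation closed at $v$, lengths add, and since $a,b$ are not both zero, at least one full copy of $W_8$ or $W_{12}$ contributes a differentiating edge. Your extra remarks on edge legality and on the word-distinctness reading of ``differentiating'' only spell out what the paper leaves implicit.
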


\begin{proof}
Both walks begin and end at the same vertex, so any concatenation of copies is
again closed at that vertex.  If \(a>0\), the concatenation contains a
differentiating edge belonging to a copy of \(W_8\); if \(a=0\), then
\(b>0\) and it contains one belonging to \(W_{12}\).  Lengths add under
concatenation.
\end{proof}

\begin{lemma}[The \(8,12\) semigroup]
\label{lem:semigroup}
Every integer \(n\ge8\) divisible by four can be written
\(n=8a+12b\) with \(a,b\ge0\).
\end{lemma}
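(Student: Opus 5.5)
The plan is to reduce the claim to an elementary statement about the numerical semigroup generated by \(2\) and \(3\). Write \(n=4m\). The hypotheses \(4\mid n\) and \(n\ge8\) say exactly that \(m\) is an integer with \(m\ge2\). Because \(8a+12b=4(2a+3b)\), it suffices to show that every integer \(m\ge2\) can be written as \(m=2a+3b\) with \(a,b\ge0\); then \(n=8a+12b\) for the same pair \((a,b)\).

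For that reduced claim I would split on the parity of \(m\). If \(m\) is even, take \(b=0\) and \(a=m/2\). Here \(a\ge1\) since \(m\ge2\). If \(m\) is odd, then \(m\ge3\), so take \(b=1\) and \(a=(m-3)/2\). This \(a\) is a nonnegative integer because \(m-3\) is even and nonnegative. In both cases \(2a+3b=m\), and multiplying by four gives the required representation of \(n\).

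It is also worth recording that the resulting pair \((a,b)\) is never \((0,0)\), since \(n\ge8>0\). This is the form needed to feed into Lemma~\ref{lem:common-base-gluing}, which requires \(a,b\) not both zero.

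There is no real obstacle here. The only point needing care is the odd case, where \(m\ge3\) must be checked so that \(a=(m-3)/2\) is nonnegative. That holds because an odd \(m\) with \(m\ge2\) is at least \(3\). This is the familiar fact that the Frobenius number of \(\langle2,3\rangle\) is \(1\), so every integer from \(2\) onward is representable.
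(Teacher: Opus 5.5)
Your proof is correct and follows the paper's argument exactly: write \(n=4m\) with \(m\ge2\), then take \((a,b)=(m/2,0)\) for even \(m\) and \((a,b)=((m-3)/2,1)\) for odd \(m\ge3\). Your extra remark that \((a,b)\ne(0,0)\) is a useful touch, since Lemma~\ref{lem:common-base-gluing} needs it.
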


\begin{proof}
Write \(n=4m\), where \(m\ge2\).  If \(m\) is even, take
\[
     a=m/2,\qquad b=0,
\]
so \(n=8(m/2)\).  If \(m\) is odd, then \(m\ge3\), and take
\[
     a=(m-3)/2,\qquad b=1,
\]
so \(n=12+8((m-3)/2)\).
\end{proof}

We also use the elementary repetition of a finite periodic collision.

\begin{lemma}[Finite periodic repetition]
\label{lem:finite-repetition}
If \(F_{f,m}\) has a collision and \(m\mid n\), then
\(F_{f,n}\) has a collision.
\end{lemma}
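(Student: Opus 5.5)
The plan is to lift a colliding pair periodically. Suppose \(x\neq y\) in \(\mathbb F_2^m\) satisfy \(F_{f,m}(x)=F_{f,m}(y)\), and let \(k=n/m\). Define \(X,Y\in\mathbb F_2^n\) by repetition, \(X_i=x_{i\bmod m}\) and \(Y_i=y_{i\bmod m}\) for \(0\le i<n\). These words are distinct: if \(x_j\ne y_j\), then \(X_j\ne Y_j\).

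The main step is to show that each output coordinate of \(F_{f,n}(X)\) equals the corresponding coordinate of \(F_{f,m}(x)\). Fix \(i\) and consider the window \((X_{i-2},\dots,X_{i+2})\), with indices read modulo \(n\). Since \(m\mid n\), reducing an index modulo \(n\) and then modulo \(m\) agrees with reducing it modulo \(m\) directly. Hence \(X_{i+s}=x_{(i+s)\bmod m}\) for each offset \(s\in\{-2,\dots,2\}\).

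This identity holds slot by slot, so it remains valid when \(m<5\) and several formal slots name the same cell of the \(m\)-ring. This matters for the paper's main use, \(m=4\), where slots \(x_0\) and \(x_4\) coincide. Under the slot-coincidence convention of Section~\ref{sec:preliminaries}, \(F_{f,m}\) evaluates the original five-variable \(f\) on exactly these repeated values. Therefore
\[
(F_{f,n}(X))_i=f\bigl(x_{(i-2)\bmod m},\dots,x_{(i+2)\bmod m}\bigr)=(F_{f,m}(x))_{i\bmod m},
\]
and the same computation applies to \(Y\). The equality \(F_{f,m}(x)=F_{f,m}(y)\) then gives \(F_{f,n}(X)=F_{f,n}(Y)\), so \(F_{f,n}\) has a collision.

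No step here is a genuine obstacle. The only point needing care is the slot-coincidence check for \(m<5\), and that is settled by defining \(X\) cellwise, which makes the window identity hold slot by slot. This argument does not use the pair-graph criterion of Proposition~\ref{prop:closed-walk-collision}; that route would only be available for \(m\ge5\).
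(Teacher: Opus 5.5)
Your proof is correct and is essentially the paper's argument: you repeat the colliding pair periodically and check that each radius-two window of the length-\(n\) words reduces modulo \(m\) to the corresponding window of the \(m\)-ring, slot by slot, so the slot coincidences for \(m<5\) are handled. Distinctness and equality of images then follow exactly as in the paper. Your version simply spells out the index reduction \(((i+s)\bmod n)\bmod m=(i+s)\bmod m\) that the paper leaves implicit.
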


\begin{proof}
Repeat the two colliding length-\(m\) configurations periodically to length
\(n\).  Every radius-two window in the repeated length-\(n\) words is the
corresponding window modulo \(m\), including the slot coincidences when they
occur.  The repeated words remain distinct and have equal images.
\end{proof}

\begin{theorem}[Finite-ring obstruction]
\label{thm:finite-ring}
Let \(f:\mathbb F_2^5\to\mathbb F_2\) have ANF degree exactly two.  If
\[
      4\mid n,\qquad n\ge8,
\]
then \(F_{f,n}:\mathbb F_2^n\to\mathbb F_2^n\) is not injective.
\end{theorem}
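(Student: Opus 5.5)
The proof is a case split along the population reduction of Corollary~\ref{cor:population-reduction}. Fix \(f\) of ANF degree exactly two and \(n\) with \(4\mid n\), \(n\ge8\). Either \(f\) lies in the \(63\,456\)-rule complement of the exceptional set, or it is one of the \(2\,016\) exceptional lifts. In the latter case it lies in the \(Q=0\) layer of size \(480\) or the \(Q\ne0\) layer of size \(1\,536\).

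\emph{Non-exceptional rules.} By construction, \(\rho(f)\) fails criterion \eqref{eq:ring4-criterion} of Theorem~\ref{thm:ring4-classification}, so \(G_4\) is not bijective. Under the slot-coincidence convention, \(G_4\) is exactly \(F_{f,4}\), so \(F_{f,4}\) has a collision. Since \(4\mid n\), Lemma~\ref{lem:finite-repetition} transfers this collision to \(F_{f,n}\). The one point to check is that the convention of evaluating the five-slot \(f\) with \(x_4=x_0\) is what the repetition lemma respects. That lemma explicitly allows slot coincidences, and on the length-\(n\) ring every window of the repeated word reduces to the corresponding ring-four window.

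\emph{Exceptional rules.} By Lemma~\ref{lem:output-complement}, collisions of \(F_{f,n}\) and \(F_{f+1,n}\) coincide. So we may assume \(f\) is constant-free, i.e.\ one of the \(240\) quadratic lifts \(f_t\) of \eqref{eq:q0-param} or one of the \(768\) representative lifts of \eqref{eq:qnonzero-base}. Proposition~\ref{prop:q0-certificates} (via exact coverage of the thirty quadratic parameters per representative) or Proposition~\ref{prop:qnonzero-certificates} (via exact coverage of the \(768\) pairs) supplies closed differentiating walks \(W_8,W_{12}\) at a common base vertex \(v\) of the pair graph of \(f\). Lemma~\ref{lem:semigroup} writes \(n=8a+12b\) with \(a,b\ge0\), not both zero because \(n\ge8\). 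Lemma~\ref{lem:common-base-gluing} then gives a closed differentiating walk \(W_8^aW_{12}^b\) of length \(n\). Since \(n\ge8\ge5\), Proposition~\ref{prop:closed-walk-collision} converts this walk into two distinct colliding configurations, so \(F_{f,n}\) is not injective.

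The deduction itself is routine. The step I expect to need care is making sure the case split is exhaustive and matches the certificate indexing. Every exactly quadratic rule must be non-exceptional or, after possibly complementing the output, land in the constant-free representative families exactly as parametrized. For the \(Q=0\) layer, this means the constant coefficient \(e\) is absorbed by Lemma~\ref{lem:output-complement} and the remaining lifts are parametrized via the kernel of Lemma~\ref{lem:collapse-kernel} relative to the affine base \(f_0=L\). For \(Q\ne0\), the same holds with base \(L+Q\). I would verify this against the counts \eqref{eq:exceptional-count} and \eqref{eq:easy-count}. The remaining trust is confined to the finite verification behind the two certificate propositions, which the theorem cites rather than reproves.
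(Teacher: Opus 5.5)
Your proposal is correct and follows the paper's own proof essentially step for step. It uses the same split into the \(63\,456\) ring-four-collision rules, propagated by Lemma~\ref{lem:finite-repetition}, and the \(480+1\,536\) exceptional lifts, which are reduced to constant-free representatives by Lemma~\ref{lem:output-complement} and closed via the two certificate propositions, Lemmas~\ref{lem:semigroup} and~\ref{lem:common-base-gluing}, and Proposition~\ref{prop:closed-walk-collision}; your explicit checks that \(G_4=F_{f,4}\) under the slot-coincidence convention and that \(n\ge8\ge5\) meets the walk criterion's hypothesis are details the paper leaves implicit.
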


\begin{proof}
There are \(65\,472\) exactly quadratic five-slot rules.  By
Corollary~\ref{cor:population-reduction}, \(63\,456\) of them have a ring-four
collision.  Lemma~\ref{lem:finite-repetition} propagates that collision to
every ring length divisible by four.

The remaining \(2016\) rules are the exceptional lifts.  In the \(Q=0\) layer,
Proposition~\ref{prop:q0-certificates} supplies common-base differentiating
walks of lengths 8 and 12 for the 240 constant-free exactly quadratic lifts,
and Lemma~\ref{lem:output-complement} doubles the conclusion to all 480 rules.
In the \(Q\ne0\) layer,
Proposition~\ref{prop:qnonzero-certificates} supplies the same two lengths for
all 768 constant-free representative lifts, and output complement doubles the
conclusion to all 1536 rules.

For every exceptional rule, Lemma~\ref{lem:common-base-gluing} therefore gives
a differentiating closed walk at each length \(8a+12b\) with \(a,b\ge0\), not
both zero.  Lemma~\ref{lem:semigroup} identifies these lengths with every
multiple of four from eight onward.  Proposition~\ref{prop:closed-walk-collision}
converts each such walk into a nontrivial collision.  The three disjoint
populations
\[
      63\,456+480+1536=65\,472
\]
exhaust the exactly quadratic family, proving the theorem.
\end{proof}

\begin{corollary}[Periodic full-shift collision]
\label{cor:full-shift}
Every exactly quadratic binary five-slot radius-two rule induces a
non-injective map on \(\mathbb F_2^{\mathbb Z}\).  The collision may be chosen
periodic.
\end{corollary}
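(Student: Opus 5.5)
The corollary is a direct consequence of Theorem~\ref{thm:finite-ring} once a finite collision is lifted to the full shift. I take an arbitrary exactly quadratic rule \(f\) and apply Theorem~\ref{thm:finite-ring} with \(n=8\), which satisfies \(4\mid n\) and \(n\ge8\). This gives distinct \(x,y\in\mathbb F_2^8\) with \(F_{f,8}(x)=F_{f,8}(y)\). The choice \(n=8\) matters. For \(n\ge5\) all five formal slots refer to distinct cells, so there are no slot coincidences to reconcile with the bi-infinite rule. This avoids the slight subtlety of lifting a ring-four collision, where \(x_0\) and \(x_4\) coincide.

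Next I define the periodic extensions \(\tilde x,\tilde y\in\mathbb F_2^{\mathbb Z}\) by \(\tilde x_i=x_{i\bmod 8}\) and \(\tilde y_i=y_{i\bmod 8}\). They are distinct because \(x\ne y\) differ at some index, and the same index in \(\tilde x,\tilde y\) witnesses the difference. For each \(i\in\mathbb Z\), the window \((\tilde x_{i-2},\ldots,\tilde x_{i+2})\) equals the ring window \((x_{i-2},\ldots,x_{i+2})\) with indices taken modulo \(8\). Therefore \((F_{f,\infty}(\tilde x))_i=(F_{f,8}(x))_{i\bmod8}\), and likewise for \(\tilde y\). Equality of the ring images then gives \(F_{f,\infty}(\tilde x)=F_{f,\infty}(\tilde y)\), so \(F_{f,\infty}\) is non-injective, and the colliding pair is periodic with period \(8\).

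No step here is a real obstacle; all the work is carried by Theorem~\ref{thm:finite-ring}. The one point needing care is the window identification in the previous paragraph. I handle it by checking that reduction modulo \(8\) commutes with the offsets \(-2,\ldots,2\), which holds trivially because \(\tilde x\) is \(8\)-periodic. Alternatively, the same argument works verbatim with any \(n\), including the ring-four collisions of Corollary~\ref{cor:population-reduction}, once one notes that the slot-coincidence convention at \(n=4\) is exactly what periodic extension produces.
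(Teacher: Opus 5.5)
Your proposal is correct and matches the paper's proof: apply Theorem~\ref{thm:finite-ring} at \(n=8\), then extend the two colliding ring configurations periodically to \(\mathbb Z\), so that the radius-two windows, and hence the images, repeat with period \(8\). Your explicit window check, and your remark that taking \(n\ge5\) avoids slot coincidences, only spell out what the paper states in one line.
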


\begin{proof}
Apply Theorem~\ref{thm:finite-ring}, for example at \(n=8\), and extend the two
colliding ring configurations periodically to \(\mathbb Z\).  Their radius-two
windows and hence their images repeat with the same period.
\end{proof}

The statement of Corollary~\ref{cor:full-shift} is consistent with, and as a
reversibility statement is already covered by, the proper-lifting
classification of Haugland and Omland~\cite{HauglandOmlandAlmost,HauglandOmlandNewClasses}.
The role of Theorem~\ref{thm:finite-ring} is the more specific obstruction on
every periodic ring in the congruence class \(n\equiv0\pmod4\) from \(n=8\)
onward.

\section{Verification architecture and proof boundary}
\label{sec:verification}

Theorem~\ref{thm:finite-ring} is a computer-assisted proof with independently
replayable finite certificates.  This section separates its mathematical
arguments from the finite replay obligations and from computations retained
only for regression.

\subsection{Structural layer}

The following parts of the proof are analytic and do not require trusting an
exhaustive search:
\begin{itemize}
  \item the finite-ring definitions, the pair-graph correspondence, and finite
        periodic repetition;
  \item the collapse lemma, including \(\dim\operatorname{im}\rho=11\),
        \(\dim\ker\rho=5\), and the explicit kernel
        \((x_0+x_4)\operatorname{span}\{1,x_0,x_1,x_2,x_3\}\);
  \item the ring-four classification
        \(G_4\) bijective if and only if \(\lambda=1\) and \(Q\in W\);
  \item the resulting population split
        \(65\,472=63\,456+480+1536\);
  \item the output-complement identity;
  \item common-base concatenation and the elementary semigroup identity
        \(8a+12b=4(2a+3b)\).
\end{itemize}
The explicit ring-eight hyperplane witnesses summarized in
Section~\ref{subsec:qnonzero-ring8} and Appendix~\ref{app:ring8-witnesses}
are structural explanatory results, but the main theorem does not require
them: Proposition~\ref{prop:qnonzero-certificates} independently replays the
length-8 walk contained in every common-base certificate.

\subsection{Finite certificate layer}

After the structural reductions, exactly two load-bearing finite certificate
populations remain:
\begin{equation}
\label{eq:finite-proof-core}
  \underbrace{136}_{Q=0\text{ region constructions}}
  +\underbrace{768}_{Q\ne0\text{ representative records}}
  =904.
\end{equation}
These \(904\) objects are the load-bearing finite exhibits.  They are not a
count of all operations performed by the release checkers: the full CLIs also
reconstruct finite populations and run degree, coverage, complement, and
regression/guard checks.  The exhibits are checked by two standalone,
non-searching programs:
\begin{center}
\begin{tabular}{lll}
\hline
layer & certificate object & standalone checker\\
\hline
\(Q=0\) & 136 parameter-region constructions &
\texttt{check\_q12i\_joinability.py}\\
\(Q\ne0\) & 768 common-base records &
\texttt{check\_q12jc\_joinability.py}\\
\hline
\end{tabular}
\end{center}

The common checker contract is deliberately stronger than accepting stored
summary counts.  Each checker reconstructs the expected finite population from
the mathematical parametrization rather than discovering it from the archive,
reads explicit certificate content, verifies the relevant rule and degree,
checks every stored walk through local rule equalities and closure, checks that
the two reconstructed configurations are distinct with equal global images,
and verifies exact certificate coverage.  For \(Q=0\), this includes the
independent binding \(L\mapsto f_0(L)\mapsto\) the canonical 32-lift family and
global equality of the actual and expected 240-rule ANF-mask sets.  Neither checker searches the pair
graph for a new witness or optimizes a cover.  The infinite family of ring
lengths is then obtained from Lemmas~\ref{lem:common-base-gluing} and
\ref{lem:semigroup}, not from a finite ring scan.

The supplementary release pins the certificate archives and standalone
checkers.  Appendix~\ref{app:artifacts} gives repository-relative paths,
generator identifiers, SHA-256 values, and exact replay commands.  Those
technical identifiers are reproducibility metadata rather than additional
mathematical premises.

\subsection{Certificate schema}

The two archives use slightly different compression levels.  A $Q=0$ certificate record is
a region-level template.  Its essential fields are
\[
 (L,\;\text{parameter region},\;x,y,u,v,\;\text{common base}),
\]
where \((x,y)\) is a ring-8 collision and appending \((u,v)\) produces the
ring-12 collision.  A $Q\ne0$ certificate record is per representative lift and has the
schema
\[
 (Q,L,t,\;v,\;W_8,\;W_{12}),
\]
where each walk stores labels, vertices, reconstructed words, and the indices
of differentiating labels.  These data are sufficient for deterministic
replay; no exploratory trace is needed to understand what is being verified.

\subsection{Regression-only computations}

Several larger computations remain useful as independent consistency checks
but are outside the theorem dependency graph.  In particular, the following
are regression only:
\begin{itemize}
  \item the historical 65,472-rule full-rule archive;
  \item the exhaustive collapse atlas, now superseded in the proof by
        Lemma~\ref{lem:collapse-kernel};
  \item exploratory region scans, fitted-region and rank censuses, template
        searches, and canonical-witness extension statistics from earlier exploratory
        rounds;
  \item finite sweeps that concatenate certificates at a sample of ring
        lengths after the general gluing lemma has already been proved.
\end{itemize}
None of these computations is required to infer
Theorem~\ref{thm:finite-ring} from the structural lemmas and the 904 finite
certificate exhibits.

\section{Discussion and limitations}
\label{sec:discussion}

The theorem concerns local rules of ANF degree exactly two.  Affine rules are not part of the statement, and degree-three or higher rules require different arguments.  The ring-length conclusion is also deliberately partial: odd lengths and lengths congruent to two modulo four are outside the theorem.  Nothing in the proof classifies reversibility on those remaining congruence classes.

The pair graph is a representation device rather than a methodological claim.  Its role is to turn a finite collision into a closed differentiating walk and to make the common-base concatenation transparent.  Graph-based reversibility methods for one-dimensional cellular automata are prior methodology~\cite{Sutner1991,NobeYura2004,WangEtAl2025}.  In particular, Wang et al. compute complete reversibility sequences for fixed rules, and their Theorem~3 already uses nonnegative integer combinations of circuit lengths through a common vertex; neither graph concatenation nor the semigroup-of-cycle-lengths idea is claimed here as new.  What matters for the present result is the family-wide reduction before the finite graph data are used: the four-cell classification removes \(63\,456\) rules analytically, and the remaining \(2\,016\) rules are reduced to two fixed certificate families whose length-8 and length-12 witnesses are sufficient for the stated congruence class.

The relevant distinction is between a fixed-rule complete-spectrum algorithm and a structural statement quantified over an entire rule family.  In the present proof, the rule quantifier is compressed by the ring-four classification and exact population split, while the ring-length quantifier is discharged by periodic repetition for the ring-four branch and by common-base concatenation plus the \(8,12\) semigroup lemma for the exceptional branches.  The machine layer is finite and fixed before those arguments are applied.

The finite layer remains genuine.  The \(904\) load-bearing certificate objects are not replaced by a single symbolic construction, and \(904\) should not be read as the total number of operations executed by the full verifier CLIs: \(136\) parameter-region records cover the \(Q=0\) representatives, and \(768\) records cover the constant-free \(Q\ne0\) representative lifts.  Their checkers rebuild the rules, replay the stored walks and local equalities, verify differentiation and coverage, and do not search for replacement witnesses.  The proof should therefore be read as a structural argument with an independently replayable finite certificate boundary, not as a purely symbolic classification.

Omland and St\u{a}nic\u{a}'s Section~9 is also close enough to require explicit separation: their \(\operatorname{inv}_{15}\) column is the truncation of the lifting spectrum to lengths at most 15, and their Tables~3--4 give finite computational data for five-variable quadratic liftings rather than an all-length theorem for the entire exact-quadratic family~\cite{OmlandStanica2022}.  We do not treat those finite tables as a statement about all \(n\).

Finally, the finite-ring statement and the full-shift consequence have different roles.  A finite periodic collision extends to the bi-infinite shift, so the theorem implies non-injectivity on \(\mathbb F_2^{\mathbb Z}\).  The absence of degree-two proper liftings in the relevant small-diameter regime is already represented in the classifications of Haugland and Omland~\cite{HauglandOmlandAlmost,HauglandOmlandNewClasses}.  The contribution candidate of this manuscript is therefore the uniform finite-ring obstruction on the stated congruence class, not the full-shift non-reversibility statement by itself.

\section{Conclusion}
\label{sec:conclusion}

Every binary radius-two local rule of exact ANF degree two is non-injective on each periodic ring of length \(n\ge8\) divisible by four.  The proof combines a structural classification on four cells, an exact reduction to two exceptional lift layers, a finite core of independently replayable certificates at lengths 8 and 12, and a common-base gluing argument that supplies all remaining lengths.  This separation keeps the infinite part of the theorem analytic while making the finite computer-assisted boundary explicit and reproducible.

\section*{Code and data availability}
The certificate archives, standalone verification scripts, and fresh replay logs supporting the computer-assisted part of this paper are provided directly under \path{anc/} as ancillary files accompanying this arXiv submission.

\appendix

\section{Reproducibility and artifact manifest}
\label{app:artifacts}

The theorem depends on the two finite certificate families described in Section~\ref{sec:verification}.  The Q12i certificate archive is byte-identical to the pre-patch frozen archive; only its verifier contract is patched.  The load-bearing certificate files, standalone checkers, and fresh replay logs are distributed directly in the public ancillary tree under \path{anc/}.  The file \path{anc/MANIFEST_SHA256.txt} records SHA-256 values for the distributed reproducibility files; the core certificate and checker hashes are also pinned below so that the finite objects can be identified independently of archive packaging.

The two load-bearing replays are invoked from the ancillary root (the \path{anc/} directory in the arXiv source package) by
\begin{verbatim}
python verifier/check_q12i_joinability.py \
    --in results/q12i_joinability.json
python verifier/check_q12jc_joinability.py \
    --archive results/q12jc_common_base.jsonl
\end{verbatim}
Both programs use only the Python standard library and reconstruct the finite populations needed for their respective contracts.  Their mathematical obligations are summarized in Section~\ref{sec:verification}; exploratory generators are not required to replay the proof.

Fresh non-quick release replays and negative verifier-regression results are archived under \path{anc/release_logs/}.  Their SHA-256 values, together with the regression-driver hash, are recorded in \path{anc/MANIFEST_SHA256.txt}.  Empty stderr files are intentionally omitted; successful runs are recorded by the corresponding exit-code and stdout files.  The regression mutations test verifier rejection paths and are not theorem certificates.

\begin{table}[ht]
\centering
\small
\begin{tabular}{@{}>{\raggedright\arraybackslash}p{0.25\linewidth}>{\raggedright\arraybackslash}p{0.25\linewidth}>{\raggedright\arraybackslash}p{0.42\linewidth}@{}}
\toprule
object & role & SHA-256 / pinned identifier\\
\midrule
\path{results/q12i_joinability.json} & \(Q=0\) report, 136 constructions & file SHA-256 \path{d98b3d9c90fa4c6154d6fbcad4dfb295f91e6fbfc29bd22a3759bf0e71101ed3}; generator \path{q12i-join-2.9}; report digest \path{9a457be940b42b60790ebc7627a94eb92c9736ad4394161cf5eac1c54983fd3b}\\
\addlinespace
\path{verifier/check_q12i_joinability.py} & standalone \(Q=0\) replay checker & file SHA-256 \path{88d3c0ea003a7228d2fc5b520dd6094112d0764f9430dd330114e12675410f2b}\\
\addlinespace
\path{results/q12jc_common_base.jsonl} & \(Q\ne0\) archive, 768 records & file SHA-256 \path{8b1327f8e70f89ac8b9f3e5688b9bc01335e7b99647ef2466f2c8f49ffecb872}\\
\addlinespace
\path{results/q12jc_joinability.json} & \(Q\ne0\) report & file SHA-256 \path{4fcc6d42bdc262a9d82442234468f98ba152d691a5f867cf763f50104fd91d7a}; generator \path{q12jc-join-2.12}; report digest \path{a6a1aae5d1e712703d5b60be0c9605808d5e5386a3bf078f3d843ab93f570266}\\
\addlinespace
\path{verifier/check_q12jc_joinability.py} & standalone \(Q\ne0\) replay checker & file SHA-256 \path{8dabb815858bbf3bb0c486dd982262102ba5af0ba24932a1fab9c806712f0ae3}\\
\bottomrule
\end{tabular}
\caption{Load-bearing finite artifacts and standalone checkers.  The long digests identify the frozen objects; they are not mathematical assumptions beyond fixing which certificate data are replayed.}
\label{tab:artifact-manifest}
\end{table}

Larger regression artifacts used during development include the historical \(65\,472\)-rule archive, the collapse atlas, fitted-region scans, rank censuses, and finite concatenation sweeps.  These objects can detect drift, but they are not inputs to Theorem~\ref{thm:finite-ring}.

\section{Explicit ring-eight structural witnesses}
\label{app:ring8-witnesses}

This appendix records the explicit ring-eight formulas summarized in Section~\ref{subsec:qnonzero-ring8}.  They explain the parity structure of the \(Q\ne0\) layer but are not a dependency of the common-base certificate proposition or of Theorem~\ref{thm:finite-ring}.

For the four-periodic defect \(A=00110011\) with quadratic support index \(q=0\), let
\[
   u_j=x_j+x_{j+4}\quad(0\le j<4),
   \qquad s=x_0+x_1+x_2+x_3,
\]
and set
\[
 \alpha=t_3+t_4,\quad \beta=t_2+t_3,\quad
 \gamma=t_1+t_2,\quad \eta=t_1+t_4.
\]
The eight cell equalities reduce to
\begin{align}
 s+(1+\alpha)u_2+u_3 &= 1+l_2+l_3,\label{eq:app-qjr-row0}\\
 s+(1+\beta)u_3 &= 1+l_1+l_2,\label{eq:app-qjr-row1}\\
 s+\gamma u_0 &= 1+l_0+l_1,\label{eq:app-qjr-row2}\\
 s+u_0+\eta u_1 &= 1+l_0+l_3,\label{eq:app-qjr-row3}\\
 u_0+u_1+u_2+u_3 &=0.\label{eq:app-qjr-parity}
\end{align}
The frozen source re-derives these rows from the truth table.  With
\(p=l_1+l_3\), \(r=t_1+t_2+t_3+t_4\), and
\(\Delta=t_1+t_4\), explicit backgrounds are given by Table~\ref{tab:ring8-witnesses}.

\begin{table}[ht]
\centering
\small
\begin{tabular}{@{}llll@{}}
\toprule
case & hypothesis & \(s\) & \((u_0,u_1,u_2,u_3)\)\\
\midrule
\(A,q=0,p=1\) & \(t_2+t_4=0\) & \(1+l_0+l_1\) & \((1+\gamma,\gamma,1+\alpha,\alpha)\)\\
\(A,q=0,p=0\) & \(t_1+t_3=0\) & \(1+l_0+l_1+t_1+t_2\) & \((\gamma,1+\Delta,\Delta,1+\gamma)\)\\
\(B,q=0,p=1\) & \(t_2+t_4=1\) & \(0\) & \((0,1,0,1)\)\\
\(B,q=0,p=0\) & \(t_1+t_3=1\) & \(0\) & \((1,0,1,0)\)\\
\(B,q=1,2\) & \(r=0\) & \(1+l_0+l_2+t_1+t_3\) & \((1,1,1,1)\)\\
\(C\) & \(r=1\) & \(0\) & \((1,1,1,1)\)\\
\bottomrule
\end{tabular}
\caption{Explicit ring-eight backgrounds from the frozen hyperplane lemmas.  Here \(B=01010101\) and \(C=11111111\).  The paired hypotheses are complementary in the relevant parameter space.}
\label{tab:ring8-witnesses}
\end{table}

For each row one recovers a background by taking
\[
    x_0=s,\qquad x_1=x_2=x_3=0,\qquad x_{j+4}=x_j+u_j.
\]
The complementary hypotheses cover all four-bit kernel parameters in the indicated families.  These formulas supply an explanatory structural proof of ring-eight non-injectivity for the layer; the all-length theorem instead uses the independently replayed common-base records of Proposition~\ref{prop:qnonzero-certificates}.

\printbibliography

@article{HauglandOmlandAlmost,
  author  = {Haugland, Jan Kristian and Omland, Tron},
  title   = {Shift-invariant transformations and almost liftings},
  journal = {Cryptography and Communications},
  volume  = {18},
  pages   = {705--726},
  year    = {2026},
  doi     = {10.1007/s12095-025-00848-w},
  note    = {Published online 14 November 2025}
}

@article{HauglandOmlandNewClasses,
  author  = {Haugland, Jan Kristian and Omland, Tron},
  title   = {New classes of reversible cellular automata},
  journal = {Designs, Codes and Cryptography},
  volume  = {94},
  eid     = {131},
  year    = {2026},
  doi     = {10.1007/s10623-026-01869-z}
}

@article{WangEtAl2025,
  author  = {Wang, Chen and Ma, Junchi and Wang, Chao and Lin, Defu and Chen, Weilin},
  title   = {Two graphs: Resolving the periodic reversibility of one-dimensional finite cellular automata},
  journal = {Applied Mathematics and Computation},
  volume  = {489},
  pages   = {129151},
  year    = {2025},
  doi     = {10.1016/j.amc.2024.129151},
  eprint  = {2402.05404},
  archivePrefix = {arXiv}
}

@article{NobeYura2004,
  author  = {Nobe, Atsushi and Yura, Fumitaka},
  title   = {On reversibility of cellular automata with periodic boundary conditions},
  journal = {Journal of Physics A: Mathematical and General},
  volume  = {37},
  number  = {22},
  pages   = {5789--5804},
  year    = {2004},
  doi     = {10.1088/0305-4470/37/22/006}
}

@article{Sutner1991,
  author  = {Sutner, Klaus},
  title   = {De Bruijn Graphs and Linear Cellular Automata},
  journal = {Complex Systems},
  volume  = {5},
  number  = {1},
  pages   = {19--30},
  year    = {1991}
}

@misc{OmlandStanica2022,
  author        = {Omland, Tron and St\u{a}nic\u{a}, Pantelimon},
  title         = {Permutation rotation-symmetric S-boxes, liftings and affine equivalence},
  year          = {2022},
  eprint        = {2203.00778},
  archivePrefix = {arXiv},
  primaryClass  = {math.CO},
  note          = {IACR Cryptology ePrint Archive, Report 2022/279}
}
\end{document}